\documentclass[12pt]{amsproc}

\usepackage[left=1in, right=1in, top=1in, bottom=1in]{geometry}
\usepackage[authoryear]{natbib}
\usepackage{amsmath,amsthm,amsfonts,amssymb,bm}
\usepackage[T1]{fontenc}
\usepackage{mathtools}
\usepackage[dvipsnames]{xcolor}
\usepackage[colorlinks=true,citecolor=blue,urlcolor=magenta,linkcolor=blue]{hyperref}
\usepackage{blkarray}
\usepackage{tikz,tikz-cd}
\usepackage{cleveref}
\usepackage{bbm}
\usepackage{algorithm}
\usepackage{algpseudocode}
\usepackage{enumitem}
\usepackage{subcaption}
\usepackage{graphicx}
\mathtoolsset{showonlyrefs}

\usetikzlibrary{chains,matrix,scopes,fit,decorations,positioning,shapes,arrows,decorations.markings,fit,backgrounds}
\newcommand\cD{\mathcal{D}}

\newcommand\cG{\mathcal{G}}

\newcommand\cI{\mathcal{I}}

\newcommand\cS{\mathcal{S}}
\newcommand\cT{\mathcal{T}}

\newcommand\T{\mathcal{T}}

\newcommand{\cum}{{\rm cum}}
\newcommand{\pa}{\mathrm{pa}}

\newcommand\E{\mathbb{E}}

\newcommand\R{\mathbb{R}}
\newcommand\cV{\mathcal{V}}
\newcommand\Sym{\mathrm{Sym}}

\DeclareMathOperator*{\argmin}{arg\,min}

\DeclareMathOperator{\Cov}{Cov}
\DeclareMathOperator{\Var}{Var}

\theoremstyle{plain}
\newtheorem{thm}{thm}[section]
\newtheorem{theorem}[thm]{Theorem}
\newtheorem{proposition}[thm]{Proposition}
\newtheorem{lemma}[thm]{Lemma}

\theoremstyle{definition}

\newtheorem{example}[thm]{Example}
\newtheorem{assumption}[thm]{Assumption}

\theoremstyle{remark}
\newtheorem{remark}[thm]{Remark}

\AddToHook{env/theorem/begin}{\crefalias{thm}{theorem}}
\AddToHook{env/proposition/begin}{\crefalias{thm}{proposition}}
\AddToHook{env/lemma/begin}{\crefalias{thm}{lemma}}
\AddToHook{env/corollary/begin}{\crefalias{thm}{corollary}}
\AddToHook{env/conjecture/begin}{\crefalias{thm}{conjecture}}
\AddToHook{env/definition/begin}{\crefalias{thm}{definition}}
\AddToHook{env/example/begin}{\crefalias{thm}{example}}
\AddToHook{env/assumption/begin}{\crefalias{thm}{assumption}}
\AddToHook{env/remark/begin}{\crefalias{thm}{remark}}

\crefname{thm}{theorem}{theorems}
\Crefname{thm}{Theorem}{Theorems}
\crefname{theorem}{theorem}{theorems}
\Crefname{theorem}{Theorem}{Theorems}
\crefname{proposition}{proposition}{propositions}
\Crefname{proposition}{Proposition}{Propositions}
\crefname{lemma}{lemma}{lemmas}
\Crefname{lemma}{Lemma}{Lemmas}
\crefname{corollary}{corollary}{corollaries}
\Crefname{corollary}{Corollary}{Corollaries}
\crefname{conjecture}{conjecture}{conjectures}
\Crefname{conjecture}{Conjecture}{Conjectures}
\crefname{definition}{definition}{definitions}
\Crefname{definition}{Definition}{Definitions}
\crefname{example}{example}{examples}
\Crefname{example}{Example}{Examples}
\crefname{assumption}{assumption}{assumptions}
\Crefname{assumption}{Assumption}{Assumptions}
\crefname{remark}{remark}{remarks}
\Crefname{remark}{Remark}{Remarks}

\title{Causal discovery under mean independence and linearity}

\author{Geert Mesters}
\address{Federal Reserve Bank of New York}
\email{geert.mesters@ny.frb.org}

\author{Alvaro Ribot}
\address{Harvard University}
\email{aribotbarrado@g.harvard.edu}

\author{Anna Seigal}
\address{Harvard University}
\email{aseigal@seas.harvard.edu}

\author{Piotr Zwiernik}
\address{Universitat Pompeu Fabra and Barcelona School of Economics}
\email{piotr.zwiernik@upf.edu}

\begin{document}

\begin{abstract}
Causal discovery methods such as LiNGAM identify causal structure from observational data by assuming mutually independent disturbances. This assumption is fragile: shared volatility, common scale effects, or other forms of dependence can cause the methods to recover the wrong causal order, even with infinite data. We introduce the Linear Mean-Independent Acyclic Model (LiMIAM), which replaces full independence with weaker one-sided mean-independence restrictions on the disturbances. 
Under finite-order consequences of these restrictions, source nodes are generically identifiable, and hence a compatible causal order can be recovered recursively. Our proof is constructive and leads to DirectLiMIAM, a sequential residual-based algorithm for causal discovery under dependent noise. In simulations with mean-independent but dependent disturbances, DirectLiMIAM outperforms LiNGAM methods.
A large-scale empirical application to the oil market highlights the implausibility of the independence assumption and the ability of DirectLiMIAM to recover a realistic causal ordering, from policy to production and from prices to inflation.
\end{abstract}

\maketitle

\begingroup
\renewcommand\thefootnote{}
\footnotetext{All authors contributed equally.}
\endgroup

\section{Introduction}

Recovering causal structure from observational data is a central problem in statistics, econometrics, and many other empirical sciences. Linear structural equation models (LSEMs) are a natural framework for representing causal mechanisms, but they are not identifiable from the joint distribution alone: without additional assumptions, different directed acyclic graphs (DAGs) can induce the same statistical model \citep{Dhrymes1994,pearl2009causality, spirtes2000causation}. A major advance in this area was the Linear Non-Gaussian Acyclic Model (LiNGAM) of \citet{Shimizu2006LiNGAM}. Under the assumption that the causal mechanisms are linear and the structural disturbances are mutually independent and non-Gaussian, LiNGAM achieves identifiability of the causal DAG, by connecting causal discovery with independent component analysis (ICA) \citep{Comon1994}. It has led to an extensive literature, including direct estimation \citep{Shimizu2011DirectLiNGAM}, likelihood-based alternatives \citep{Hyvarinen2013}, high-dimensional methods based on higher-order moments \citep{WangDrton2020HighDimensionalLiNGAM}, extensions to time-series and structural vector autoregressive models (SVARs) \citep{LanneLutkepohl2010,Hyvarinen2010,Lanne2017,GourierouxMonfortRenne2017,Hoesch2024}, and treatments of latent confounding \citep{Hoyer2008latent,Wang2023causal}. At the level of higher-order statistics, independence implies diagonal cumulant tensors, a viewpoint that is explicit in tensor-diagonalization formulations of ICA.

The key assumption underlying LiNGAM is that the errors are mutually independent and non-Gaussian. While non-Gaussianity is often plausible, mutual independence is harder to justify. A more realistic assumption is that the disturbances satisfy
weaker first-moment restrictions while still exhibiting dependence in their higher moments. Such dependence is common. In economics and finance, disturbances often exhibit conditional heteroskedasticity, stochastic volatility, and common scale
factors \citep{Engle1982,Bollerslev1986,Harvey1994}. In SVARs, non-Gaussian identification has been proposed as
an alternative to classical exclusion restrictions \citep{LanneLutkepohl2010,Hyvarinen2010,Lanne2017,GourierouxMonfortRenne2017}, but dependent non-Gaussian disturbances arise through volatility clustering and
scale mixtures. In single-cell biology, cell-to-cell variability often
contains extrinsic components, driven, for example, by differences in cell state, cell cycle, or other factors shared across genes or pathways
\citep{Elowitz2002,Brennecke2013,Buettner2015,Stegle2015}. In neuroscience, ongoing activity and common state fluctuations can induce shared variability across neural responses \citep{Arieli1996,CohenKohn2011,Rabinowitz2015}. In such situations, it
may remain plausible that the mean of one disturbance cannot
be predicted from earlier disturbances even though full independence
fails. These exogeneity-type conditions motivate the framework in this paper.

If independence is assumed when it is false, LiNGAM procedures can produce  incorrect causal conclusions. We show that standard independence-based
implementations, including JADE \citep{cardoso1996jacobi}, FastICA
\citep{hyvarinen1999fast}, and residual-based LiNGAM procedures
\citep{Shimizu2011DirectLiNGAM}, can recover the wrong causal order
even at the population level. Under disturbances that
satisfy mean-independence conditions but are dependent, these methods may reverse the
direction of causation. We complement the population counterexamples
with simulations showing that the problem becomes more severe as the
dimension grows: under dependent but mean-independent noise, the graph
recovery performance of LiNGAM methods deteriorates, whereas our method remains stable.

The starting point of this paper is that the assumptions needed
for causal identification are weaker than those of
independence-based LiNGAM procedures. We introduce the \emph{Linear
Mean-Independent Acyclic Model} (LiMIAM), in which the
disturbances may be dependent but must satisfy one-sided mean-independence
conditions, which we call order-dependent mean independence. The underlying
causal object is a DAG. Whenever we write matrices or tensors in triangular
form, we have fixed a compatible topological order and relabeled the
variables accordingly. In this coordinate system, whenever one disturbance
appears later than another in the chosen topological order, the later
disturbance cannot be predicted in mean from the earlier one. This allows higher-order dependence
and is weaker than usual independence-based assumptions, including the
diagonal cumulant structure that underlies ICA, see~\Cref{fig:schematic}.
Our main result is that, under finite-order moment conditions implied by
this order-dependent mean independence, source nodes can be identified
generically, and hence a compatible causal order is recovered recursively.
Once an order is known, the coefficient matrix, and hence the DAG, can be
estimated by regression restricted to earlier variables in the order. By
generically, we mean away from a set of measure zero in the relevant
parameter space.

\begin{figure}[htbp]
\centering
\begin{tikzpicture}[
    font=\normalsize,
    head/.style={font=\bfseries\color{black!80}, align=center},
    arrow/.style={font=\large\color{black!65}},
    relaxarrow/.style={font=\Large\color{orange!85!black}},
    rowlabel/.style={font=\small\bfseries},
    rowbox/.style={
        rounded corners,
        minimum width=11.4cm,
        minimum height=1.55cm,
        line width=0.8pt
    },
    oldbox/.style={rowbox, draw=blue!60!black, fill=blue!7},
    midbox/.style={rowbox, draw=teal!65!black, fill=teal!7},
    newbox/.style={rowbox, draw=orange!85!black, fill=yellow!22, line width=1.2pt},
    lefttext/.style={align=center, text width=3.75cm},
    righttext/.style={align=center, text width=5.05cm}
]

\def\xL{0}
\def\xM{4.05}
\def\xR{7.55}
\def\xC{3.80}

\node[head, text width=3.9cm] at (\xL,0.35)
    {\rm distribution of disturbances};
\node[head, text width=5.1cm] at (\xR,0.35)
    {\rm moments/cumulants of disturbances};

\node[oldbox] at (\xC,-1.0) {};
\node[midbox] at (\xC,-3.15) {};
\node[newbox] at (\xC,-5.30) {};
\node[rowlabel, text=blue!70!black, anchor=east]
    at (-2.05,-1.0) {LiNGAM};
\node[rowlabel, text=orange!85!black, anchor=east]
    at (-2.05,-5.30) {LiMIAM};

\node[lefttext] at (\xL,-1.0) {independent};
\node[arrow] at (\xM,-1.0) {$\Longrightarrow$};
\node[righttext] at (\xR,-1.0) {diagonal cumulants};

\node[relaxarrow] at (\xL,-2.05) {$\Downarrow$};
\node[relaxarrow] at (\xR,-2.05) {$\Downarrow$};

\node[lefttext] at (\xL,-3.15) {pairwise\\mean independent};
\node[arrow] at (\xM,-3.15) {$\Longrightarrow$};

\node[righttext] at (\xR,-3.15)
{
$(i,\ldots,i,j)$ entry is $0$\\
for all $i\neq j$
};

\node[relaxarrow] at (\xL,-4.25) {$\Downarrow$};
\node[relaxarrow] at (\xR,-4.25) {$\Downarrow$};

\node[lefttext] at (\xL,-5.30)
{
order-dependent\\
mean independent
};

\node[arrow] at (\xM,-5.30) {$\Longrightarrow$};

\node[righttext] at (\xR,-5.30)
{
$(i,\ldots,i,j)$ entry is $0$\\
when $i$ precedes $j$ in the causal order
};

\node[
    font=\Huge,
    color=orange!80!black
] at (-2.85,-3.15)
{$\Downarrow$};
\end{tikzpicture}
\caption{From LiNGAM to LiMIAM.}
\label{fig:schematic}
\end{figure}

Our work is related in spirit to recent developments in
non-independent component analysis
\citep{MestersZwiernik2024NICA,Ribot2025TensorEigenvectors,RibotSeigalZwiernik2025BeyondICA}, which
show that the independence assumption in ICA can be relaxed. The present
paper goes beyond this work in two ways. First, it uses
that the recursive causal model
mixing matrix is lower-triangular, after a permutation,
which allows weaker assumptions than those needed for
identifiability of an unrestricted mixing. Second, it develops a
causal-discovery algorithm based on recursive identification of source
variables and residualization. This is a LiNGAM-type algorithm not driven by ICA, which extends the logic of
DirectLiNGAM \citep{Shimizu2011DirectLiNGAM} to
dependent disturbances. Accordingly, the paper contributes at three levels: failure analysis,
identification theory, and methodology. For finite-moments,
our viewpoint leads to an algebraic model for coupled
moments under order-dependent mean independence. More broadly, it shows that
structural identifiability of the causal model is a separate issue from
the validity of specific independence-based LiNGAM procedures.

The finite-moment identification argument also connects to the
higher-order moment criterion of \citet{WangDrton2020HighDimensionalLiNGAM}.
They study high-dimensional LiNGAM under independent non-Gaussian
disturbances and formulate their conditions in terms of DAG
parents. We use the same algebraic type of moment identity under weaker
one-sided moment restrictions, where disturbances may be dependent. The
resulting non-cancellation condition is the LiMIAM analogue of their
parental faithfulness condition.

The rest of the paper is organized as follows. In~\Cref{sec:model}, we introduce the recursive linear model, the order-dependent mean independence assumption and show the failure of LiNGAM procedures under dependent but mean-independent disturbances. In~\Cref{sec:ident}, we show that source nodes, and hence a compatible causal order, can be recovered generically from a pair of moments under order-dependent mean independence, and we study a distribution-level route based on residual mean independence. In~\Cref{sec:algorithm}, we describe the DirectLiMIAM algorithm and its relation to existing methods. \Cref{sec:synthetic} presents numerical experiments on simulated data that show the performance of DirectLiMIAM. \Cref{sec:benchmarks} presents an empirical application for the oil market where we connect to the economic literature on SVAR models and show that DirectLiMIAM can identify an economically interpretable DAG, with causal order from policy to production and prices to inflation.

Our Python code with an implementation of \Cref{alg:DirectLiMIAM} and scripts to replicate the numerical experiments is available on GitHub: \href{https://github.com/alvaro-ribot/limiam}{https://github.com/alvaro-ribot/limiam}.

\section{Motivation: non-independence and the brittleness of LiNGAM}
\label{sec:model}

\subsection{Recursive linear models and first-moment assumptions}

We consider a linear structural equation model (LSEM) for random variables
\(X_1,\dots,X_p\). The underlying causal structure is a directed
acyclic graph (DAG). For notational convenience, we fix a compatible
topological order and relabel variables so that parents precede children,
written \(1<\cdots<p\). Thus,
\begin{equation}
\label{eq:lmiam-coordinate}
X_i=\sum_{j<i}\beta_{ij}X_j+\varepsilon_i,
\qquad i=1,\dots,p,
\end{equation}
where some coefficients \(\beta_{ij} \in \mathbb{R}\) may be zero.
The variable \(\varepsilon_i\) is the disturbance for \(X_i\). The set of parents
of \(i\), denoted $\pa(i)$, are the indices \(j<i\) for which \(\beta_{ij}\neq 0\), and
these parent sets define the DAG. See \Cref{fig:dag-lsem-example}. A variable is \emph{exogenous}, or a
\emph{source}, if it has no parents.

Writing \(B=(\beta_{ij})_{ij}\in\R^{p\times p}\),
the model \eqref{eq:lmiam-coordinate} is equivalent to
\begin{equation}
\label{eq:sem}
X=BX+\varepsilon,
\end{equation}
where \(B\) is strictly lower triangular. Hence
\begin{equation}
\label{eq:mix}
X=A\varepsilon,
\qquad \text{where} \qquad 
A:=(I-B)^{-1},
\end{equation}
and \(A\) is lower triangular with unit diagonal.

A standard structural exogeneity condition is that each disturbance has
conditional mean zero given the variables on the right-hand side of its
equation in \eqref{eq:lmiam-coordinate}, that is
\begin{equation}
\label{eq:parent-mean-zero}
\E[\varepsilon_i\mid X_{\pa(i)}]=0,
\qquad i=1,\dots,p,
\end{equation}
which says that the unexplained part of \(X_i\) cannot be predicted in mean
from its direct causes. In LiMIAM we impose related first-moment
restrictions on the disturbances. These restrictions still allow dependence
in higher moments. See \Cref{fig:samples-example}.

\begin{figure}[t]
\centering
\begin{subfigure}[b]{0.3\textwidth}
\centering
\begin{minipage}[c][4cm][c]{\linewidth}
\centering
\[
\begin{cases}
X_1 = \varepsilon_1 \\
X_2 = \beta_{21} X_1 + \varepsilon_2 \\
X_3 = \beta_{31} X_1 + \varepsilon_3 \\
X_4 = \beta_{42} X_2 + \beta_{43} X_3 + \varepsilon_4
\end{cases}
\]
\end{minipage}
\caption{LSEM} \label{fig:lsem-example}
\end{subfigure}
\begin{subfigure}[b]{0.3\textwidth}
\centering
\begin{minipage}[c][4cm][c]{\linewidth}
\centering
\begin{tikzpicture}[
    node distance=1.5cm,
    every node/.style={circle, draw, minimum size=1cm}
]
\node (X1) {$X_1$};
\node (X2) [below left of=X1] {$X_2$};
\node (X3) [below right of=X1] {$X_3$};
\node (X4) [below right of=X2] {$X_4$};
\draw[->, thick] (X1) -- (X2);
\draw[->, thick] (X1) -- (X3);
\draw[->, thick] (X2) -- (X4);
\draw[->, thick] (X3) -- (X4);
\end{tikzpicture}
\end{minipage}
\caption{DAG} \label{fig:dag-example}
\end{subfigure}
\begin{subfigure}[b]{0.3\textwidth}
\centering
\begin{minipage}[c][4cm][c]{\linewidth}
\centering
\includegraphics[width=0.8\linewidth]{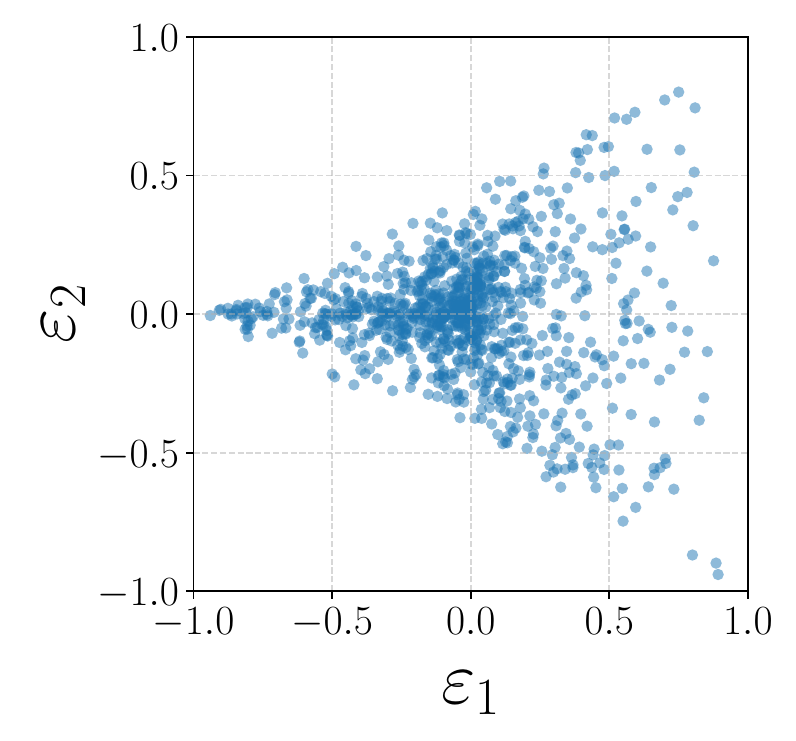}
\end{minipage}
\caption{Mean independence} \label{fig:samples-example}
\end{subfigure}

\caption{
(\textsc{a}) Linear structural equation model, (\textsc{b}) corresponding directed acyclic graph, and (\textsc{c}) order-dependent mean independent disturbances, i.e., $\E[\varepsilon_2 \mid \varepsilon_1] = 0$ but $\E[\varepsilon_1 \mid \varepsilon_2] \neq 0$.}
\label{fig:dag-lsem-example}

\end{figure}

LiNGAM strengthens such first-moment exogeneity to mutual independence of
the coordinates of \(\varepsilon\). We show that identifiability still
holds under weaker first-moment conditions, which rule out certain
one-sided mean predictions among disturbances while allowing dependence in
higher moments.
The motivation is that independence-based
LiNGAM procedures can fail when mutual independence is replaced by
weaker first-moment assumptions. This is not a
small-sample issue: even at the population level, LiNGAM
methods may recover the wrong causal order.

\subsection{Population failure of independence-based LiNGAM}
\label{sec:fragility}
\label{sec:fragility-population}

We recall the classical LiNGAM setup
\citep{Shimizu2006LiNGAM}. In the model
\eqref{eq:sem}--\eqref{eq:mix}, if the structural disturbances
\(\varepsilon=(\varepsilon_1,\dots,\varepsilon_p)\) are mutually
independent and at most one component is Gaussian, then the mixing
matrix \(A\) is identifiable up to permutation and scaling via ICA
\citep{Comon1994}. The ICA--LiNGAM algorithm first estimates an
unmixing matrix \(W=A^{-1}\) using ICA and then applies row permutation
and scaling to obtain a strictly lower-triangular representation of~\(B\).

We first analyze ICA-based LiNGAM under weaker first-moment
assumptions. Assume \(\E[\varepsilon]=0\) and \(\Cov(\varepsilon)=I\) for simplicity. Procedures such as JADE \citep{cardoso1996jacobi} are
justified by mutual independence of the structural errors. Under mutual
independence, the fourth-order cumulant tensor of \(\varepsilon\) is
diagonal. Under weaker mean-independence assumptions, this is no longer
true (cf. Figure~\ref{fig:schematic}). For example, the joint cumulant
\[
\cum(\varepsilon_i,\varepsilon_i,\varepsilon_j,\varepsilon_j)
=
\Cov(\varepsilon_i^2,\varepsilon_j^2)
\]
need not vanish.

We illustrate the failure mechanism in the bivariate model
\[
X_1=\varepsilon_1,
\qquad
X_2=X_1+\varepsilon_2,
\]
which corresponds to the true DAG \(X_1\to X_2\) and mixing matrix
\begin{equation}
\label{eq:trueA}
A=
\begin{pmatrix}
1&0\\
1&1
\end{pmatrix}.
\end{equation}
Let
\(\kappa_4(\varepsilon)\) be the fourth-order cumulant tensor of
\(\varepsilon\) with
\[
\kappa_4(\varepsilon)_{1111}=k_1,\qquad
\kappa_4(\varepsilon)_{2222}=k_2,\qquad
\kappa_4(\varepsilon)_{1122}=c,\qquad
\kappa_4(\varepsilon)_{1112}
=
\kappa_4(\varepsilon)_{1222}
=
0,
\]
where \(k_1,k_2>0\) and \(c>0\). We argue that LiNGAM-based methods recover the wrong causal order if $c$ is large enough.

\begin{theorem}
\label{thm:lingam_reversal}
If $(k_1+k_2+6c)^2 < 8(k_1^2+k_2^2)$, then the population JADE step
recovers the true mixing matrix in \eqref{eq:trueA}. If the
reverse inequality holds, it recovers
\[
\hat A=
\frac{1}{\sqrt2}
\begin{pmatrix}
1 & 1\\
0 & 2
\end{pmatrix},
\]
which leads to the reversed causal order \(X_2\to X_1\). In the
boundary case of equality, both mixing matrices are global maximizers.
\end{theorem}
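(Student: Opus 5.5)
The approach is to reduce the population JADE step to a one-parameter maximization over rotations of the disturbance vector. Since $\Cov(\varepsilon)=I$, we have $\Cov(X)=AA^\top$. For any whitening matrix $W_0$ with $W_0AA^\top W_0^\top=I$, the matrix $W_0A$ is orthogonal. JADE then seeks an orthogonal $Q$ such that $y=QW_0X=R\varepsilon$, with $R:=QW_0A$ orthogonal, maximizes the JADE contrast
\[
J(R)=\sum_{i,k,l}\kappa_4(y)_{iikl}^2 ,
\]
which is equivalent to minimizing the off-diagonal cumulant mass. Up to sign and permutation, which do not affect the causal conclusion, write $R=R(\varphi)$ for the rotation by angle $\varphi$. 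Then $\kappa_4(y)=\kappa_4(\varepsilon)\bullet(R,R,R,R)$ is obtained by rotating the quartic form
\[
f(u)=k_1u_1^4+k_2u_2^4+6c\,u_1^2u_2^2 .
\]
The task is therefore to maximize $g(\varphi):=J(R(\varphi))$ over $\varphi$.

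First I will establish the structure of $g$. Because $\kappa_{1112}=\kappa_{1222}=0$, $f$ is invariant under $u_1\mapsto -u_1$ and under $u_2\mapsto -u_2$. Hence $g$ is even and $\pi/2$-periodic. Each rotated cumulant entry is a homogeneous quartic in $(\cos\varphi,\sin\varphi)$, so $g$ is a trigonometric polynomial of degree at most $8$. Together with evenness and $\pi/2$-periodicity, this gives
\[
g(\varphi)=a+b\cos4\varphi+d\cos8\varphi
\]
for constants $a,b,d$ depending on $k_1,k_2,c$. Setting $x=\cos4\varphi\in[-1,1]$, we get $g=a-d+bx+2dx^2$. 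This is a quadratic in $x$, and I will determine its maximizers on $[-1,1]$.

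The main obstacle is to exclude an interior maximizer in $x$. Such a maximizer exists only if $d<0$ and the vertex $-b/(4d)$ lies in $(-1,1)$. I would compute $d$ explicitly, for instance by evaluating $g$ at $\varphi=0,\pi/8,\pi/4$. I expect to find $d\ge 0$, or at least that the vertex falls outside $[-1,1]$ whenever $k_1,k_2,c>0$. If this fails in part of the parameter range, the statement would have to be read as comparing the two critical configurations $\varphi=0$ and $\varphi=\pi/4$. Granting the claim, the maximizers are $x=1$ ($\varphi=0$) and $x=-1$ ($\varphi=\pi/4$), and I compare their values:
\begin{itemize}
\item At $\varphi=0$: $g(0)=k_1^2+k_2^2+2c^2$.
\item At $\varphi=\pi/4$: the rotated entries are $\kappa'_{1111}=\kappa'_{2222}=(k_1+k_2+6c)/4$, $\kappa'_{1122}=(k_1+k_2-2c)/4$, and $\kappa'_{1112}=-\kappa'_{1222}=\pm(k_1-k_2)/4$. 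Substituting these gives $g(\pi/4)$.
\end{itemize}
Expanding $g(0)-g(\pi/4)$ should reduce, after clearing constants, to a positive multiple of
\[
8(k_1^2+k_2^2)-(k_1+k_2+6c)^2 .
\]
This yields all three cases of \Cref{thm:lingam_reversal}: $\varphi=0$ is the unique maximizer, $\varphi=\pi/4$ is, or both tie.

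Finally, I translate the two maximizers back into mixing matrices. At $\varphi=0$ we have $R=I$ up to signed permutation, so $\hat A=(W_0^{-1}Q^\top)=A$, which is the true mixing matrix \eqref{eq:trueA}. At $\varphi=\pi/4$ we have $\hat A=A\,R(\pi/4)^\top$, and a direct computation gives $\tfrac{1}{\sqrt2}\begin{pmatrix}1&1\\0&2\end{pmatrix}$, up to column signs and order. Its inverse is upper triangular. Applying the LiNGAM row permutation and scaling step therefore produces a $B$ whose only nonzero entry expresses $X_1$ in terms of $X_2$, which is the reversed order $X_2\to X_1$.
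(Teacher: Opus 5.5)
Your outline matches the paper's: reduce to a rotation angle, show the objective is convex in a cosine variable so the maximum is at $\theta\in\{0,\pi/4\}$ modulo $\pi/2$, compare the two endpoint values, and translate back via $\hat A=AR(\pi/4)^\top$. The gap is the objective. You maximize Cardoso's full criterion $J=\sum_{i,k,l}\kappa_4(y)_{iikl}^2$. The stated threshold instead comes from the marginal contrast $g(\theta)=\kappa_4(\hat\varepsilon_1(\theta))^2+\kappa_4(\hat\varepsilon_2(\theta))^2$, which is what the paper's proof takes as the population JADE objective here.

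For your $J$, the step ``$g(0)-g(\pi/4)$ reduces to a positive multiple of $8(k_1^2+k_2^2)-(k_1+k_2+6c)^2$'' is false. Use your rotated entries, correcting one sign: $\kappa'_{1112}=\kappa'_{1222}=(k_1-k_2)/4$, which is harmless after squaring. Then
\[
J(\pi/4)=\tfrac12(k_1^2+k_2^2)+(k_1+k_2)c+5c^2,\qquad 4\bigl[J(0)-J(\pi/4)\bigr]=(k_1+k_2+2c)(k_1+k_2-6c)+(k_1-k_2)^2,
\]
whereas
\[
8(k_1^2+k_2^2)-(k_1+k_2+6c)^2=3(k_1+k_2+2c)(k_1+k_2-6c)+4(k_1-k_2)^2 .
\]
These agree up to a positive factor only when $k_1=k_2$. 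For example, $k_1=3$, $k_2=1$, $c=0.8$ gives $(k_1+k_2+6c)^2=77.44<80=8(k_1^2+k_2^2)$, yet $J(0)=11.28<11.4=J(\pi/4)$. Moreover, for $J$ the $\cos8\varphi$ coefficient vanishes identically, so $J$ is affine in $\cos4\varphi$. At such parameters $\varphi=\pi/4$ is therefore the global maximizer. With your objective the first assertion of the theorem is false, and no algebra closes that step. What your route actually proves is the threshold $(k_1+k_2+2c)(k_1+k_2-6c)+(k_1-k_2)^2\gtrless 0$.

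The repair is to maximize $g(\varphi)=\kappa_4(y_1)^2+\kappa_4(y_2)^2$. Then $g(0)=k_1^2+k_2^2$, with no $2c^2$ term, and $g(\pi/4)=(k_1+k_2+6c)^2/8$; these compare exactly via the stated inequality. In your parametrization $g=a+b\cos4\varphi+d\cos8\varphi$ one finds $d=(k_1+k_2-6c)^2/64\ge 0$, so your convexity step goes through. This is the paper's identity $g=2(a-qu)^2+2b^2u$ rewritten with $u=\cos^2 2\theta=(1+\cos4\theta)/2$. With that change, the rest of your proposal coincides with the paper's proof, including the computation of $\hat A$ and the reversed order. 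The two contrasts give the same threshold $c>k/3$ when $k_1=k_2=k$, which is why the discrepancy is invisible in the common-variance example.
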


\Cref{thm:lingam_reversal} shows that, even with infinite data, JADE-based ICA--LiNGAM can recover the
wrong causal order. Its proof is in Appendix~\ref{sec:proofs-appendix}. Here,
strong dependence in the squared disturbances, as measured
by \(c\), causes the procedure to select the reversed DAG.

\begin{example}
\label{ex:common-variance-reversal}
The parameter region in \Cref{thm:lingam_reversal} is nonempty, and it
arises under common-variance disturbances. Let
\[
\varepsilon_1=\sigma Z_1,
\qquad
\varepsilon_2=\sigma Z_2,
\]
where \(Z_1,Z_2\) are independent copies of a uniform
random variable on \([-\sqrt3,\sqrt3]\), and \(\sigma\) is independent
of \((Z_1,Z_2)\). Assume that \(\sigma^2\) takes values \(0.1\) and
\(1.9\) with probability \(1/2\) each. Then \(\E[\sigma^2]=1\) and
\(\E[\sigma^4]=(0.1^2+1.9^2)/2=1.81\), so \(\Cov(\varepsilon)=I\).
Since \(Z_1\) and \(Z_2\) are symmetric and independent, we have
\(\kappa_4(\varepsilon)_{1112}
=
\kappa_4(\varepsilon)_{1222}
=
0\). Moreover,
\[
c=\kappa_4(\varepsilon)_{1122}
=
\Cov(\varepsilon_1^2,\varepsilon_2^2)
=
\E[\sigma^4]-1
=
0.81.
\]
Because \(\E[Z_1^4]=9/5\), we also obtain
\[
k_1=k_2
=
\kappa_4(\varepsilon_1)
=
\E[\sigma^4]\E[Z_1^4]-3
=
1.81\cdot 1.8-3
=
0.258.
\]
Hence
\[
(k_1+k_2+6c)^2
=
(0.258+0.258+6\cdot 0.81)^2
>
8(k_1^2+k_2^2).
\]
By \Cref{thm:lingam_reversal}, the
population JADE step recovers the wrong mixing matrix and hence the
reversed causal order \(X_2\to X_1\).

It is instructive to consider \(\varepsilon_i=\sigma Z_i\),
where \(Z_1,Z_2\) are i.i.d.\ standard Gaussian and independent of
\(\sigma\). After normalizing so that \(\E[\sigma^2]=1\), and assuming
\(\E[\sigma^4]<\infty\), we have \(k_1=k_2=3c\), regardless of the
distribution of \(\sigma\). Consequently, $(k_1+k_2+6c)^2 = 8(k_1^2+k_2^2)$,
so Gaussian scale mixtures lie on the boundary case of
\Cref{thm:lingam_reversal}. This
extends to all elliptical distributions.
\end{example}

In the two-dimensional setting, kurtosis-based FastICA also 
optimizes a marginal fourth-order contrast over orthogonal rotations.
Therefore it induces the same ordering of the candidate rotations as
the JADE objective, and under the same condition converges to the same
incorrect \(\pi/4\) rotation. After the usual LiNGAM post-processing
step, FastICA--LiNGAM returns the same incorrect causal order.

\subsection{Failure of residual-based LiNGAM}

The phenomenon is not confined to ICA-based implementations.
Residual-based LiNGAM procedures can also fail at the population level.
DirectLiNGAM \citep{Shimizu2011DirectLiNGAM} identifies a variable that
appears most nearly exogenous by checking whether its regression
residuals are independent of it. To illustrate the population behavior
under mean independence, consider the dependence score
\[
D(U,V):=\bigl|\Cov(U^2,V^2)\bigr|.
\]
If \(X_1\) is treated as the candidate exogenous variable, then
regressing \(X_2\) on \(X_1\) gives the residual \(R_{2\mid 1}=X_2-X_1=\varepsilon_2\),
so
\[
D(X_1,R_{2\mid 1})
=
\bigl|\Cov(\varepsilon_1^2,\varepsilon_2^2)\bigr|
=
c.
\]
If instead \(X_2\) is treated as the candidate exogenous variable, then
regressing \(X_1\) on \(X_2\) gives
\[
R_{1\mid 2}
=
X_1-\tfrac12 X_2
=
\tfrac12(\varepsilon_1-\varepsilon_2),
\qquad
X_2=\varepsilon_1+\varepsilon_2.
\]
A direct calculation yields
\(D(X_2,R_{1\mid 2})
=
\tfrac14\bigl|k_1+k_2-2c\bigr|\).
Therefore \(X_2\) appears more exogenous whenever
\(\tfrac14\bigl|k_1+k_2-2c\bigr|<c\). In particular, if
\(c>(k_1+k_2)/6\), the incorrect variable \(X_2\) has the smaller
dependence score.

Thus procedures that select the apparently most
exogenous variable by minimizing residual dependence can also fail under mean independence. The failure is again driven by
dependence in second moments, such as shared volatility, which is not
removed by regression. In the case \(k_1=k_2=k\), the
condition reduces to \(c>k/3\), the same threshold as in
\Cref{thm:lingam_reversal}.

Simulations in Section~\ref{sec:synthetic} suggest that this problematic behavior of LiNGAM algorithms under mean independence becomes more severe in higher dimensions. 

\section{Identification under order-dependent mean independence}
\label{sec:ident}

We formalize the assumptions behind LiMIAM. We first develop a
finite-moment identification argument based on the second and \(d\)-th
moment tensors, for a fixed integer \(d\ge 3\). We then turn to a distribution-level criterion, based on
residual mean independence.

\subsection{Mean independence and one-sided moment zeros}

Let \(U\) and \(V\) be random variables. We say that \(U\) is
\emph{mean independent} of \(V\) if
\[
\E[U\mid V]=\E[U]
\qquad \text{almost surely.}
\]
Mean independence is not symmetric. For example, if \(U\sim N(0,1)\)
and \(V=U^2\), then \(\E[U\mid V]=0=\E[U]\), but
\(\E[V\mid U]=U^2\neq \E[V]\).

In the sequel, all random variables are assumed mean centered: $\E[\varepsilon] = 0$. Our
results can also be expressed in terms of cumulants, but for simplicity
we work with moments.

Since the model is recursive, the variables can be arranged in any
compatible topological order. We write \(j\preceq i\) if \(j=i\) or if
\(j\) is an ancestor of \(i\) in the DAG. The graph-based first-moment
restriction used below is one-sided: if \(j\not\preceq i\), then
$\E(\varepsilon_j \mid \varepsilon_i) = 0$.
In a
fixed topological order, this implies the triangular condition obtained
by requiring later disturbances to be mean independent of earlier
ones.

\begin{proposition}[One-sided moment zeros]
\label{prop:one-sided-moment-zeros}
Let \(U\) and \(V\) be mean-zero random variables. If \(U\) is mean
independent of \(V\), then \(\E[UV^d]=0\) for all integers \(d\ge 0\)
whenever the moments exist. Conversely, if \(\E[UV^d]=0\) for all
\(d\ge 0\) and the moment generating function of \(V\) is finite in a
neighborhood of zero, then \(U\) is mean independent of \(V\).
\end{proposition}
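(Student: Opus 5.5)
The forward direction follows from the tower property. If $U$ is mean independent of $V$ and $\E[U]=0$, then $\E[U\mid V]=0$ almost surely. Whenever $\E|UV^d|<\infty$, we may pull out what is known and write
\[
\E[UV^d]=\E\bigl[V^d\,\E[U\mid V]\bigr]=0 .
\]
The case $d=0$ is just $\E[U]=0$. The only care needed is integrability: the existence of $\E[UV^d]$ is exactly what justifies conditioning on $V$ inside the expectation.

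For the converse, set $g(V):=\E[U\mid V]$, which exists since $U$ is integrable. It suffices to show $g(V)=0$ almost surely. Consider the finite signed measure $\nu(B):=\E[U\mathbbm{1}\{V\in B\}]=\E[g(V)\mathbbm{1}\{V\in B\}]$ on $\R$. The hypothesis says that every moment $\int v^d\,d\nu(v)=\E[UV^d]$ vanishes. If $\nu$ is the zero measure, then $\E[g(V)\mathbbm{1}\{V\in B\}]=0$ for every Borel $B$. Taking $B=\{g>0\}$ and $B=\{g<0\}$ then gives $g(V)=0$ almost surely, which is mean independence.

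The main obstacle is showing that $\nu=0$ from the vanishing of its moments. This is where the moment generating function assumption enters. Suppose $\E[e^{t|V|}]<\infty$ for $|t|<\delta$.

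\begin{enumerate}
\item Define $\phi(z):=\E[Ue^{zV}]$ for complex $z$ with $|\operatorname{Re} z|<\delta/2$. By Cauchy--Schwarz, $\E|U|e^{s|V|}\le \|U\|_2\,\E[e^{2s|V|}]^{1/2}<\infty$ for $s<\delta/2$. This requires $U\in L^2$, and I would add that assumption if needed; it is natural in the paper's moment setting.
\item With this bound, dominated convergence shows $\phi$ is analytic on the strip and expands in its power series near $0$ with coefficients $\E[UV^d]/d!$. All these coefficients vanish, so $\phi\equiv 0$ on the strip by the identity theorem.
\item In particular $\phi(i\theta)=\int e^{i\theta v}\,d\nu(v)=0$ for all real $\theta$. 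Fourier uniqueness for finite signed measures then gives $\nu=0$.
\end{enumerate}

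If one wants to avoid the $L^2$ assumption on $U$, a less clean route is to work with the positive and negative parts of $\nu$ and argue directly that the density $e^{s|v|}$ is integrable against each of them.
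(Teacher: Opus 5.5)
Your forward direction is the paper's argument. Your converse is the ``analyticity of the moment generating function and density of exponentials'' route that the paper only sketches by citing an external theorem: you extend $z\mapsto\E[Ue^{zV}]$ analytically to a strip, show it vanishes because its Taylor coefficients $\E[UV^d]/d!$ do, and then apply Fourier uniqueness to the signed measure $\nu$. Under your added hypothesis the argument is complete and correct. Any $U\in L^p$ with $p>1$ suffices by H\"older, as does directly assuming $\E[|U|e^{s|V|}]<\infty$ for some $s>0$.

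The extra integrability on $U$ is not just a convenience of your proof. The converse as literally stated, with only a moment generating function for $V$, is false, so some such hypothesis must be added to the proposition itself. Here is a counterexample.
\begin{itemize}
\item Let $V$ have density $\tfrac12 e^{-|v|}$, so $\E[e^{tV}]=(1-t^2)^{-1}$ for $|t|<1$.
\item Set $h(v)=e^{-|v|^{1/4}}\sin(|v|^{1/4})$ and $U=2e^{|V|}h(V)$.
\item Then $\E|UV^d|=\int_{\R}|v|^d|h(v)|\,dv<\infty$ and $\E[UV^d]=\int_{\R}v^d h(v)\,dv$ for every $d\ge 0$.
\item For odd $d$ this integral vanishes by symmetry.
\item For $d=2n$, the substitution $v=u^4$ gives $8\int_0^\infty u^{8n+3}e^{-u}\sin u\,du=8\,\mathrm{Im}\bigl[(8n+3)!/(1-i)^{8n+4}\bigr]$. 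This is $0$ because $(1-i)^{8n+4}=-2^{4n+2}$ is real.
\item In particular $\E[U]=\E[V]=0$, so every hypothesis of the stated converse holds.
\item Yet $\E[U\mid V]=U\neq 0$ almost surely, so $U$ is not mean independent of $V$.
\end{itemize}

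The same example shows that your closing suggestion cannot work. Here $|\nu|$ has Lebesgue density $|h|$, and $e^{s|v|}$ is not integrable against it for any $s>0$. Moreover $\nu\neq 0$ even though all its moments vanish; this is a Stieltjes-type indeterminate moment problem. Keep the integrability hypothesis and delete the final paragraph of your proposal. Also note that the proposition as stated needs the same repair.
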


\begin{proof}
If \(U\) is mean independent of \(V\), then
\[
\E[UV^d]
=
\E\!\left[V^d\E[U\mid V]\right]
=
\E\!\left[V^d\E[U]\right]
=0
\]
for all \(d\ge 0\). The converse follows from standard arguments using
analyticity of the moment generating function and density of
exponentials; see, for example,
\cite[Theorem~2.3]{RibotSeigalZwiernik2025BeyondICA}.
\end{proof}

\subsection{Finite-order LiMIAM}

We first study a finite-order version of LiMIAM, namely the
moment-level consequence of order-dependent mean independence at one
fixed order \(d\) given by \Cref{prop:one-sided-moment-zeros}.

\begin{assumption}[Finite-order LiMIAM]
\label{ass:weak-limiam}
The observed variables follow the recursive linear model
\eqref{eq:sem}--\eqref{eq:mix}. Equivalently, \(X=A\varepsilon\), where
\(A=(I-B)^{-1}\), and \(B\) is the weighted adjacency matrix of a DAG
\(\cG\). Let \(j\preceq i\) denote that \(j=i\) or that \(j\) is an
ancestor of \(i\) in \(\cG\). Assume that \(\E[\varepsilon]=0\),
\(\E[\varepsilon_i^2]>0\) for all \(i\), and
\(\E[\|\varepsilon\|^d]<\infty\) for some fixed integer \(d\ge 3\).
Moreover,
\[
\E[\varepsilon_i\varepsilon_j]=0
\qquad\text{for all } i\neq j,
\]
and
\[
\E[\varepsilon_i^{\,d-1}\varepsilon_j]=0
\qquad
\text{whenever } j\not\preceq i.
\]
\end{assumption}

\begin{remark}
Under full order-dependent mean independence, the corresponding
moment entries vanish at all orders. If these entries vanish up to order
\(d\), then the corresponding entries of the cumulant tensor
\(\kappa_d(\varepsilon)\) also vanish, by the moment--cumulant relation.
Thus one may often work with moments or cumulants. This is analogous to ICA:
instead of imposing full independence, which makes all mixed cumulants
vanish, one may impose a finite-order condition at one fixed order
\(d\ge 3\). In the setting of \Cref{ass:weak-limiam}, the relevant structure is not full
diagonality of the cumulant tensor, but the one-sided vanishing pattern
above.
It is simpler to state directly in terms of moments. 
\end{remark}

\subsection{Identification of source nodes from a pair of moments}

\Cref{ass:weak-limiam} implies a moment identity that singles out the
source nodes of the DAG. Related higher-order residual moment
identities were used by \citet{WangDrton2020HighDimensionalLiNGAM} for
high-dimensional LiNGAM under independent non-Gaussian disturbances; see their Theorem~1. In their setting, the vanishing of the
score follows from independence after adjustment for parents. Here the
same algebraic type of identity follows from one-sided moment restrictions
implied by order-dependent mean independence. The criterion is also
related in spirit to higher-order least squares \citep{Schultheiss2024}.

For \(i,j\in[p]\), define the score
\[
S_{ij}^{(d)}
=
\E[X_iX_j^{d-1}]\,\E[X_j^2]
-
\E[X_iX_j]\,\E[X_j^d].
\]

\begin{lemma}[Source identity]
\label{lem:mmi_source_ratio}
Under \Cref{ass:weak-limiam}, let \(j\) be a source node of \(\cG\).
Then \(S_{ij}^{(d)}=0\) for every \(i\in[p]\).
\end{lemma}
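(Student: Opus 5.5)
Since $j$ is a source node, $X_j=\varepsilon_j$: the $j$th row of $B$ vanishes, so the $j$th row of $A=(I-B)^{-1}$ is the unit vector $e_j^\top$. For any $i$ we expand $X_i=\sum_k a_{ik}\varepsilon_k$ and reduce each of the four moments in $S^{(d)}_{ij}$ to moments of $\varepsilon$. The plan is to show that $S^{(d)}_{ij}$ is a linear combination of terms $\E[\varepsilon_k\varepsilon_j^{d-1}]$ and $\E[\varepsilon_k\varepsilon_j]$, and that the only surviving contribution is from $k=j$, where the two products cancel.

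First, by linearity,
\[
\E[X_iX_j^{d-1}]=\sum_k a_{ik}\,\E[\varepsilon_k\varepsilon_j^{d-1}],
\qquad
\E[X_iX_j]=\sum_k a_{ik}\,\E[\varepsilon_k\varepsilon_j].
\]
We also have $\E[X_j^2]=\E[\varepsilon_j^2]$ and $\E[X_j^d]=\E[\varepsilon_j^d]$. All these moments are finite by H\"older's inequality and the assumption $\E[\|\varepsilon\|^d]<\infty$.

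By the uncorrelatedness assumption in \Cref{ass:weak-limiam}, $\E[\varepsilon_k\varepsilon_j]=0$ for $k\neq j$, so $\E[X_iX_j]=a_{ij}\E[\varepsilon_j^2]$.

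For the $(d-1)$-moment, the key observation is that a source $j$ has no ancestors. Hence for every $k\neq j$ we have $k\not\preceq j$, and the one-sided restriction, applied with $j$ playing the role of ``$i$'' in the assumption, gives $\E[\varepsilon_j^{d-1}\varepsilon_k]=0$. Therefore $\E[X_iX_j^{d-1}]=a_{ij}\E[\varepsilon_j^{d}]$.

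Substituting,
\[
S^{(d)}_{ij}=a_{ij}\E[\varepsilon_j^d]\E[\varepsilon_j^2]-a_{ij}\E[\varepsilon_j^2]\E[\varepsilon_j^d]=0.
\]

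The only delicate point is matching indices correctly in the one-sided condition. We need $\E[\varepsilon_j^{d-1}\varepsilon_k]=0$ with the high power on the source $j$. This is exactly the case ``$k\not\preceq j$'' of the assumption, and it holds for all $k\neq j$ precisely because $j$ is a source. It does not matter whether $k$ is a descendant of $j$ or unrelated to it, so no genericity or faithfulness is needed here. Note also that the result does not require $a_{ij}\neq 0$, and it covers $i=j$ trivially.
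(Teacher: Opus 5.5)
Your proposal is correct and follows the paper's proof essentially verbatim: you use $X_j=\varepsilon_j$, expand $X_i=\sum_k A_{ik}\varepsilon_k$, and kill all $k\neq j$ terms via the one-sided condition $\E[\varepsilon_j^{d-1}\varepsilon_k]=0$ (since $k\not\preceq j$ for a source) and the diagonal covariance. This leaves $A_{ij}\E[X_j^d]$ and $A_{ij}\E[X_j^2]$, which cancel in $S_{ij}^{(d)}$.
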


\begin{proof}
Since \(j\) is a source node, \(X_j=\varepsilon_j\). Also
\(X_i=\sum_k A_{ik}\varepsilon_k\). Hence
\[
\E[X_iX_j^{d-1}]
=
\sum_k A_{ik}\E[\varepsilon_k\varepsilon_j^{d-1}]
=
\sum_k A_{ik}\E[\varepsilon_j^{d-1}\varepsilon_k].
\]
By \Cref{ass:weak-limiam}, the moment
\(\E[\varepsilon_j^{d-1}\varepsilon_k]\) is zero whenever
\(k\not\preceq j\). Since \(j\) is a source, \(k\preceq j\) holds only
for \(k=j\). Thus only the term \(k=j\) remains, and
\[
\E[X_iX_j^{d-1}]
=
A_{ij}\E[\varepsilon_j^d]
=
A_{ij}\E[X_j^d].
\]
Similarly, using the diagonal covariance condition,
\[
\E[X_iX_j]
=
\sum_k A_{ik}\E[\varepsilon_k\varepsilon_j]
=
A_{ij}\E[\varepsilon_j^2]
=
A_{ij}\E[X_j^2].
\]
Combining the two identities gives \(S_{ij}^{(d)}=0\).
\end{proof}

\begin{lemma}[Generic failure for non-sources]
\label{lem:mmi_nonsource}
Under \Cref{ass:weak-limiam}, let \(j\) be a non-source node of
\(\cG\). Then, for generic admissible parameters, there exists
\(i\in[p]\) such that \(S_{ij}^{(d)}\neq 0\).
\end{lemma}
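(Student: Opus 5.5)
Since $S^{(d)}_{ij}$ is a polynomial in the entries of $A$ and in the admissible moments of $\varepsilon$, I will use the standard genericity argument. The admissible parameter space is the product of the set of matrices $A=(I-B)^{-1}$ with $B$ supported on the edges of $\cG$ and the set of moment values compatible with \Cref{ass:weak-limiam}. The latter is a linear space of diagonal second moments $\sigma_k^2=\E[\varepsilon_k^2]$ and of order-$d$ moment tensors with the prescribed one-sided zeros. The plan is to fix a non-source $j$ and a specific index $i$, and to show that $S^{(d)}_{ij}$ is not the zero polynomial on this irreducible parameter space. It then vanishes only on a proper algebraic subset, hence on a set of measure zero.

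The key choice is $i=k$, where $k\in\pa(j)$ is a parent of $j$ chosen to be a source node. Such a $k$ exists, since $j$ is a non-source and we can walk back along ancestors to a source that is a parent of some ancestor; more simply, I will take any parent $k$ and pick parameters making all other entries inert. To exhibit a single admissible point where $S^{(d)}_{kj}\neq0$, I will set every edge weight to zero except $\beta_{jk}=\beta\neq0$. This lies in the closure, which suffices because nonvanishing at any point of the parameter space, or of its Zariski closure, shows the polynomial is nonzero. The model then reduces to $X_j=\beta\varepsilon_k+\varepsilon_j$ and $X_k=\varepsilon_k$ on the relevant coordinates. Next I will choose the order-$d$ moments of $(\varepsilon_k,\varepsilon_j)$ to be those of independent variables, which is admissible since independence gives all one-sided zeros. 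Then $\E[X_kX_j]=\beta\sigma_k^2$ and $\E[X_j^2]=\beta^2\sigma_k^2+\sigma_j^2$. Expanding $\E[X_kX_j^{d-1}]$ and $\E[X_j^d]$ binomially gives explicit polynomials in $\beta$ and the moments $m_r=\E[\varepsilon_k^r]$ and $n_r=\E[\varepsilon_j^r]$.

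Finally, I will check that the resulting polynomial in $\beta$ is nonzero. The cleanest route is to look at the terms linear in the pure moment $n_d=\E[\varepsilon_j^d]$, which is a free parameter. It enters $S^{(d)}_{kj}$ only through $\E[X_j^d]$, with coefficient $-\E[X_kX_j]=-\beta\sigma_k^2\neq0$. This holds because $\E[X_kX_j^{d-1}]$ contains $\varepsilon_j$ at most to the power $d-1$, multiplied by $\varepsilon_k$. Hence $\partial S^{(d)}_{kj}/\partial n_d=-\beta\sigma_k^2\neq0$, so $S^{(d)}_{kj}$ is a nonconstant polynomial. The main obstacle is making precise that $n_d$ can be varied freely within the admissible moment set while the other parameters stay fixed. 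Since the constraints are only zeros of specific mixed entries and positivity of the second moments, the pure entry $\E[\varepsilon_j^d]$ is unconstrained at the level of moment tensors, and genericity is understood in that space. I would state this explicitly, and note that for even $d$, realizability by an actual distribution imposes only an open condition, which does not affect the measure-zero conclusion.
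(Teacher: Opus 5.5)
Your argument is correct and is essentially the paper's proof. Both observe that $\E[\varepsilon_j^d]$ enters $S^{(d)}_{ij}$ only through $\E[X_j^d]$, so $S^{(d)}_{ij}$ is affine in it with coefficient $-\E[X_iX_j]$; the paper takes $i$ to be a source ancestor of $j$, so that $X_i=\varepsilon_i$ identically and the coefficient is $-A_{ji}\E[\varepsilon_i^2]$ with $A_{ji}$ a nonzero path polynomial, whereas you take a parent and set the other edge weights to zero, which is only a minor variation. Your opening claim that $j$ always has a source parent is false (take $j=3$ in the chain $1\to 2\to 3$), but you immediately fall back on an arbitrary parent together with that specialization, so the slip does not affect the proof.
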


\begin{proof}
Choose a source ancestor \(i\) of \(j\), which exists because \(j\) is
not a source. Consider \(P_{ij}=S_{ij}^{(d)}\). We show that \(P_{ij}\)
is not the zero polynomial in the admissible parameters. Since
\(A_{jj}=1\), the moment \(\E[\varepsilon_j^d]\) appears in
\(\E[X_j^d]\) with coefficient one. It does not appear in
\(\E[X_iX_j^{d-1}]\), because \(X_i=\varepsilon_i\) and \(X_j^{d-1}\)
has total degree \(d-1\). It also does not appear in the second-order
moments. Hence \(P_{ij}\) is affine in \(\E[\varepsilon_j^d]\), with
coefficient \(-\E[X_iX_j]\). Since \(i\) is a source, \(X_i=\varepsilon_i\). Therefore, using the
diagonal covariance condition,
\[
\E[X_iX_j]
=
\E[\varepsilon_iX_j]
=
\sum_k A_{jk}\E[\varepsilon_i\varepsilon_k]
=
A_{ji}\E[\varepsilon_i^2].
\]
Now \(A_{ji}\) is not the zero polynomial in the structural coefficients:
indeed, \(A=(I-B)^{-1}=I+B+\cdots+B^{p-1}\), and \(A_{ji}\) is the sum
of products of edge coefficients over all directed paths from \(i\) to
\(j\). Since \(i\) is an ancestor of \(j\), at least one such path exists.
Thus \(A_{ji}\E[\varepsilon_i^2]\) is not identically zero. Therefore the
coefficient of \(\E[\varepsilon_j^d]\) in \(P_{ij}\) is not identically
zero, and \(P_{ij}\) is not the zero polynomial. Hence
\(S_{ij}^{(d)}\neq0\) outside a proper algebraic subset of the admissible
parameter space.
\end{proof}

\begin{theorem}[Generic identification from a pair of moments]
\label{thm:moment-identifiability}
Under \Cref{ass:weak-limiam}, the coefficient matrix \(B\) is
generically identifiable from \((\mu_2(X),\mu_d(X))\).
\end{theorem}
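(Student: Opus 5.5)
The plan is a recursive peeling argument built on \Cref{lem:mmi_source_ratio} and \Cref{lem:mmi_nonsource}. From $\mu_2(X)$ and $\mu_d(X)$ we can compute every score $S^{(d)}_{ij}$. Together the two lemmas show that, for generic parameters, the set of source nodes of $\cG$ is exactly
\[
\{\,j : S^{(d)}_{ij}=0 \text{ for all } i\in[p]\,\},
\]
so we can pick a source $j_1$.

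Next we residualize. For $i\neq j_1$ set
\[
\tilde X_i = X_i - \frac{\E[X_iX_{j_1}]}{\E[X_{j_1}^2]}\,X_{j_1}.
\]
Since $X_{j_1}=\varepsilon_{j_1}$ and the disturbances are uncorrelated, the coefficient equals $A_{ij_1}$. Hence $\tilde X_i=\sum_{k\neq j_1}A_{ik}\varepsilon_k$.

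The key step is to show that $\tilde X=(\tilde X_i)_{i\neq j_1}$ again satisfies \Cref{ass:weak-limiam}. Its mixing matrix is $A$ with row and column $j_1$ deleted. This is $(I-\tilde B)^{-1}$, where $\tilde B$ is $B$ restricted to $[p]\setminus\{j_1\}$, because deleting a source does not change path sums among the remaining nodes. The new DAG $\tilde\cG$ is $\cG$ minus $j_1$, and its ancestor relation is the restriction of $\preceq$, since no directed path passes through a source. The disturbances $\varepsilon_k$ for $k\neq j_1$ still satisfy both moment conditions, because the relation $j\not\preceq i$ is inherited.

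The moments $\mu_2(\tilde X)$ and $\mu_d(\tilde X)$ are polynomial functions of $\mu_2(X)$ and $\mu_d(X)$: they are multilinear transforms by a matrix built from $\mu_2(X)$. So we can iterate and obtain an order $j_1,j_2,\dots,j_p$ compatible with $\cG$. Given this order, each row of $B$ is recovered by regressing $X_{j_m}$ on $X_{j_1},\dots,X_{j_{m-1}}$ using $\mu_2(X)$ alone. In a compatible order, $\varepsilon_{j_m}$ is uncorrelated with the earlier $X$'s, which are combinations of earlier disturbances, so the population regression coefficients equal the $\beta$'s.

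The main obstacle is genericity at each stage. \Cref{lem:mmi_nonsource} is applied to the reduced model, whose parameters are a projection or subset of the original ones, namely $\tilde B$ and the moments of $(\varepsilon_k)_{k\neq j_1}$. To handle this, I would treat the admissible parameter space as the structural coefficients together with the free moment entries of $\varepsilon$ up to order $d$ that are not forced to vanish. Each non-source score at each stage, for each possible choice of the sources already removed, is a polynomial in these parameters. There are finitely many such sequences of removed sources, and each polynomial is nonzero by \Cref{lem:mmi_nonsource} applied to the sub-DAG; that argument used only $\E[\varepsilon_j^d]$ and a path product, both of which survive restriction. Therefore the union of their zero sets is a proper algebraic subset, which is of measure zero.

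A subtlety is that different choices of the source $j_1$ when several sources exist lead to different orders. Every compatible order yields the same $B$ through the regressions, so the output is well defined. I would also check that the admissible set is irreducible, or at least that no polynomial vanishes on a component, for instance by exhibiting independent-noise points inside it.
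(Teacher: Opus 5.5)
Your proposal is correct and follows essentially the same route as the paper's proof. Sources are detected via \Cref{lem:mmi_source_ratio} and \Cref{lem:mmi_nonsource}; one residualizes on a chosen source so that the reduced model inherits \Cref{ass:weak-limiam} with the corresponding submatrix of $A$ as mixing matrix; genericity is controlled by a finite union of exceptional sets over source-removal sequences; and $B$ is recovered from $\mu_2(X)$ by regression in the resulting order. Your added details simply make explicit what the paper leaves implicit: that $\mu_2(\tilde X)$ and $\mu_d(\tilde X)$ are computable from $(\mu_2(X),\mu_d(X))$ (rationally rather than polynomially, because of the division by $\E[X_{j_1}^2]$), and that the output $B$ does not depend on which source is chosen.
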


\begin{proof}
By \Cref{lem:mmi_source_ratio}, every source node \(j\) satisfies
\(S_{ij}^{(d)}=0\) for all \(i\in[p]\). By
\Cref{lem:mmi_nonsource}, every non-source node \(j\) violates this
identity for at least one \(i\), outside a proper algebraic subset of the
admissible parameter space. Hence the source nodes of \(\cG\) are
generically identifiable from \((\mu_2(X),\mu_d(X))\).

Choose one identified source node \(j\). Since \(X_j=\varepsilon_j\), the
proof of \Cref{lem:mmi_source_ratio} gives
\[
\frac{\E[X_iX_j]}{\E[X_j^2]}=A_{ij}
\qquad
\text{for every } i.
\]
Define residual variables \(X_i'=X_i-A_{ij}X_j\) for all \(i\neq j\).
Then
\[
X_i'
=
\sum_k A_{ik}\varepsilon_k-A_{ij}\varepsilon_j
=
\sum_{k\neq j} A_{ik}\varepsilon_k.
\]
Thus residualization removes the source disturbance \(\varepsilon_j\)
from all remaining variables. The residual vector has the same recursive
form for the reduced DAG obtained by deleting \(j\), and its mixing matrix
is the submatrix of \(A\) obtained by deleting the row and column
corresponding to \(j\). The finite-order LiMIAM moment conditions are also
inherited by the reduced disturbance vector.

The same argument can therefore be applied recursively. At each step, the
exceptional set is defined by the vanishing of finitely many nonzero
polynomials of the same type as in \Cref{lem:mmi_nonsource}. Since only
finitely many reduced DAGs can arise along source-removal recursions, the
union of these exceptional sets is contained in a proper algebraic subset
of the original admissible parameter space. Outside this set, the
recursion identifies a compatible topological order.

Once a topological order is found, \(B\) is recovered from
\(\mu_2(X)\) by successive linear regressions of each variable on its
predecessors in that order, equivalently by the corresponding
LDL decomposition. Hence \(B\) is generically identifiable from
\((\mu_2(X),\mu_d(X))\).
\end{proof}

\begin{remark}
\label{rem:single-moment-not-enough}
The coupled use of \(\mu_2(X)\) and \(\mu_d(X)\) is essential. A single
higher-order moment tensor does not identify the causal order; see
\Cref{sec:non-identifiability}.
\end{remark}

\begin{remark} \label{rem:limiam-subsumes-lingam}
The moment conditions in \Cref{ass:weak-limiam} are directional, but
\Cref{thm:moment-identifiability} remains valid if one imposes stronger
symmetric conditions. For example, it still holds if
\[
\E[\varepsilon_i\varepsilon_j]
=
\E[\varepsilon_i\varepsilon_j^{\,d-1}]
=
0
\qquad \text{for all } i\neq j.
\]
It also still holds if all off-diagonal entries of \(\kappa_d(\varepsilon)\) vanish.
In \Cref{fig:schematic}, LiMIAM can be applied to all three settings: those implied by independence, pairwise mean independence, and order-dependent mean independence.
The directional conditions are minimal for our argument; see \Cref{prop:worst-case-necessity}.
In Appendix~\ref{sec:genericity-binary} we give more details of these genericity conditions for the binary case.
\end{remark}

\subsection{Distribution-level source detection and residualization}
\label{sec:distribution-source-detection}

The finite-moment results above show that a pair of moments is already
enough for generic identification. We now turn to a distribution-level
criterion. There are two reasons for doing this. First, the primitive
assumption in LiMIAM is mean independence, not a finite collection of
moment equations. Second, the distribution-level formulation leads
directly to a residual source-detection criterion, closer in spirit to
DirectLiNGAM.

\begin{assumption}[LiMIAM]
\label{ass:limiam}
The observed variables follow the recursive linear model
\eqref{eq:sem}--\eqref{eq:mix}, where \(B\) is the weighted adjacency
matrix of a DAG \(\cG\), and \(\E[\varepsilon]=0\). Moreover, with
\(j\preceq i\) denoting ancestry in \(\cG\),
\[
\E[\varepsilon_j\mid\varepsilon_i]=0
\qquad
\text{whenever } j\not\preceq i.
\]
\end{assumption}

For \(i\neq j\), define the regression residual
\[
R_{i\mid j}
=
X_i-\frac{\E[X_iX_j]}{\E[X_j^2]}X_j.
\]
This is the same object that has a central role in DirectLiNGAM
\citep{Shimizu2011DirectLiNGAM}, where exogenous variables are
characterized through independence between a candidate regressor and the
residuals obtained after regressing the other variables on it. Here we
replace full independence by mean independence.

\begin{lemma}
\label{lem:residualid}
Under \Cref{ass:limiam}, let \(j\) be a source node of \(\cG\). Then
\(R_{i\mid j}\) is mean independent of \(X_j\) for all \(i\neq j\).
\end{lemma}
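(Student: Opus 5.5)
Since $j$ is a source, $X_j=\varepsilon_j$. The plan is to compute $\E[X_i\mid X_j]$ directly from the mixing representation and show it is exactly the linear term that the regression removes.

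We implicitly need finite second moments and $\E[X_j^2]>0$ so that $R_{i\mid j}$ is well defined; I would state this as part of the standing conditions. Given this, all conditional expectations below exist.

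\textbf{Step 1 (conditional mean of $X_i$).} Write $X_i=\sum_k A_{ik}\varepsilon_k$ and use linearity of conditional expectation:
\[
\E[X_i\mid X_j]=\sum_k A_{ik}\,\E[\varepsilon_k\mid\varepsilon_j].
\]
Because $j$ is a source, $k\preceq j$ holds only for $k=j$. So for every $k\neq j$ we have $k\not\preceq j$, and \Cref{ass:limiam} gives $\E[\varepsilon_k\mid\varepsilon_j]=0$. Only the $k=j$ term survives, and
\[
\E[X_i\mid X_j]=A_{ij}\varepsilon_j=A_{ij}X_j \quad\text{almost surely.}
\]

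\textbf{Step 2 (the regression coefficient equals $A_{ij}$).} By the tower property,
\[
\E[X_iX_j]=\E\bigl[X_j\,\E[X_i\mid X_j]\bigr]=A_{ij}\E[X_j^2].
\]
Hence $\E[X_iX_j]/\E[X_j^2]=A_{ij}$. This is the analogue of the identity in the proof of \Cref{lem:mmi_source_ratio}, but here no separate uncorrelatedness assumption is needed, since it follows from mean independence. So $R_{i\mid j}=X_i-A_{ij}X_j$.

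\textbf{Step 3 (conclude).} Combining Steps 1 and 2,
\[
\E[R_{i\mid j}\mid X_j]=A_{ij}X_j-A_{ij}X_j=0.
\]
Also $\E[R_{i\mid j}]=0$ because all variables are centered. Therefore $\E[R_{i\mid j}\mid X_j]=\E[R_{i\mid j}]$, which is mean independence of $R_{i\mid j}$ from $X_j$.

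The only delicate point is Step 2: the regression coefficient must coincide with $A_{ij}$, since otherwise a residual multiple of $X_j$ would remain. This is settled by the tower-property computation above, provided the integrability conditions hold.
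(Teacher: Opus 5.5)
Your proof is correct and follows essentially the same route as the paper. The paper splits $X_i=A_{ij}\varepsilon_j+\eta_i$ and shows $\E[\eta_i\mid X_j]=0$, which is your Step 1, and then uses the tower property to identify the regression coefficient as $A_{ij}$, which is your Step 2; your explicit requirement of finite second moments and $\E[X_j^2]>0$ for $R_{i\mid j}$ to be defined is a condition the paper leaves implicit.
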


\begin{proof}
Since \(j\) is a source node, \(X_j=\varepsilon_j\). Write
\[
X_i=A_{ij}\varepsilon_j+\eta_i,
\qquad
\eta_i=\sum_{k\neq j}A_{ik}\varepsilon_k.
\]
By \Cref{ass:limiam}, \(\E[\varepsilon_k\mid\varepsilon_j]=0\) for
every \(k\neq j\), because \(k\not\preceq j\) when \(j\) is a source.
Hence
\[
\E[\eta_i\mid\varepsilon_j]
=
\sum_{k\neq j}A_{ik}\E[\varepsilon_k\mid\varepsilon_j]
=0.
\]
Since \(X_j=\varepsilon_j\), this gives \(\E[\eta_i\mid X_j]=0\). It
also implies \(\E[\eta_iX_j]=0\), so $\E[X_iX_j]
=
A_{ij}\E[X_j^2]$. Therefore \(R_{i\mid j}=X_i-A_{ij}X_j=\eta_i\), and
\(\E[R_{i\mid j}\mid X_j]=0\).
\end{proof}

The next result shows that, for a fixed pair of variables, residual mean
independence is equivalent to linearity of the corresponding conditional
mean.

\begin{proposition}
\label{prop:equivalence-mi-linearity}
Let \(X\) and \(Y\) be random variables such that \(\E[X]=\E[Y]=0\) and
\(\E[X^2]>0\). Let \(\beta=\E[XY]/\E[X^2]\). Then the following are
equivalent:
\begin{enumerate}[label=(\roman*)]
    \item the residual \(R_{Y\mid X}=Y-\beta X\) is mean independent of \(X\);
    \item \(\E[Y\mid X]\) is linear in \(X\), namely \(\E[Y\mid X]=\beta X\).
\end{enumerate}
If, in addition, the moment generating function of \(X\) is finite in a
neighborhood of the origin, then these conditions are also equivalent to
\begin{enumerate}[label=(\roman*),start=3]
    \item for every integer \(d\ge 1\),
    \[
    \E[YX^d]=\beta \E[X^{d+1}].
    \]
\end{enumerate}
\end{proposition}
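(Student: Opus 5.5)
The plan is to prove (i)$\Leftrightarrow$(ii) directly from linearity of conditional expectation, and then obtain (iii) by applying \Cref{prop:one-sided-moment-zeros} to the pair $(U,V)=(R_{Y\mid X},X)$.

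For (i)$\Leftrightarrow$(ii), first note that $R_{Y\mid X}=Y-\beta X$ has mean zero, since $\E[X]=\E[Y]=0$. Hence mean independence of $R_{Y\mid X}$ from $X$ means exactly $\E[R_{Y\mid X}\mid X]=0$ almost surely. Because $\beta X$ is $X$-measurable,
\[
\E[R_{Y\mid X}\mid X]=\E[Y\mid X]-\beta X \quad\text{a.s.},
\]
so $\E[R_{Y\mid X}\mid X]=0$ holds if and only if $\E[Y\mid X]=\beta X$. This settles the first equivalence. The only integrability needed is $\E|Y|<\infty$ and $\E|X|<\infty$, which is implicit because the means exist and $\E[X^2]<\infty$ is used to define $\beta$.

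For (iii), I will use
\[
\E[R_{Y\mid X}X^d]=\E[YX^d]-\beta\E[X^{d+1}],
\]
so (iii) says precisely that $\E[R_{Y\mid X}X^d]=0$ for all $d\ge1$. The case $d=0$ holds automatically because $R_{Y\mid X}$ has mean zero. The case $d=1$ also holds automatically, since $\E[R_{Y\mid X}X]=\E[XY]-\beta\E[X^2]=0$ by the definition of $\beta$, so it is consistent with (iii). With this identity in hand, (i)$\Rightarrow$(iii) is the forward direction of \Cref{prop:one-sided-moment-zeros} applied with $U=R_{Y\mid X}$ and $V=X$. The converse (iii)$\Rightarrow$(i) is the converse part of \Cref{prop:one-sided-moment-zeros}, using that the moment generating function of $V=X$ is finite near the origin.

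The main point to check is existence of the moments $\E[YX^d]$. Finiteness of the moment generating function of $X$ gives all moments of $X$, but $Y$ is only assumed to have a finite second moment. By Cauchy--Schwarz, $\E|YX^d|\le(\E Y^2)^{1/2}(\E X^{2d})^{1/2}<\infty$. This requires $\E[Y^2]<\infty$, which I will assume as implicit in the setting (it is needed for $R_{Y\mid X}$ to play its regression role), or else read (iii) as holding whenever these moments exist. The converse in \Cref{prop:one-sided-moment-zeros} relies on density of exponentials or polynomials in $L^2$ of the law of $X$, and $R_{Y\mid X}\in L^2$ under this assumption, so that result applies verbatim.
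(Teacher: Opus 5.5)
Your proposal is correct and follows the paper's proof. Both get (i)$\Leftrightarrow$(ii) from $\E[R_{Y\mid X}\mid X]=\E[Y\mid X]-\beta X$ and (iii) from \Cref{prop:one-sided-moment-zeros}; the paper applies that proposition to $h(X)=\E[Y\mid X]-\beta X=\E[R_{Y\mid X}\mid X]$ and you apply it to $R_{Y\mid X}$ itself, which is the same argument since $\E[R_{Y\mid X}X^d]=\E[h(X)X^d]$. Your integrability caveat ($\E[Y^2]<\infty$, so that $\E[YX^d]$ exists and $R_{Y\mid X}\in L^2$ for the density step) is a point the paper leaves implicit.
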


\begin{proof}
Since \(\E[X]=\E[Y]=0\), the statement that \(\E[Y\mid X]\) is linear is
equivalent to \(\E[Y\mid X]=\beta X\) almost surely. By definition,
\[
\E[R_{Y\mid X}\mid X]
=
\E[Y\mid X]-\beta X.
\]
Hence \(R_{Y\mid X}\) is mean independent of \(X\) if and only if
\(\E[Y\mid X]=\beta X\). This proves the equivalence of (i) and (ii).

If \(\E[Y\mid X]=\beta X\), then for every integer \(d\ge 1\),
\[
\E[YX^d]
=
\E\!\left[X^d\E[Y\mid X]\right]
=
\beta\,\E[X^{d+1}],
\]
which gives (iii). Conversely, if (iii) holds and
\(h(X)=\E[Y\mid X]-\beta X\), then \(\E[h(X)X^d]=0\) for all
\(d\ge 0\). By the same argument as in
\Cref{prop:one-sided-moment-zeros}, this implies \(h(X)=0\) almost
surely, hence \(\E[Y\mid X]=\beta X\).
\end{proof}

A non-source variable can pass the residual criterion only if certain
conditional means are accidentally linear. We collect the needed
non-cancellation condition in the following assumption.

\begin{assumption}[Exogenous-variable detectability]
\label{ass:generic-limiam}
Assume \Cref{ass:limiam}. Moreover, in every recursively reduced model
obtained by successively removing source nodes and residualizing the
remaining variables, the variables \(X_j\) such that \(R_{i\mid j}\) is
mean independent of \(X_j\) for all \(i\neq j\) are exactly the source
nodes of the current reduced DAG.
\end{assumption}

\begin{remark}
    \Cref{ass:generic-limiam} is satisfied for general distributions, in the sense that the genericity conditions considered in \Cref{lem:mmi_nonsource} and \Cref{thm:moment-identifiability} are the finite-order consequences of \Cref{ass:generic-limiam}.
\end{remark}

\begin{theorem}[Generic identification from the distribution]
\label{thm:mmi_identification}
Under \Cref{ass:generic-limiam}, the coefficient matrix \(B\) is
identifiable from the joint distribution of \(X\).
\end{theorem}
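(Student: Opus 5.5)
The plan is to mirror the recursive argument in the proof of \Cref{thm:moment-identifiability}, with the moment score \(S^{(d)}_{ij}\) replaced by the distribution-level residual criterion of \Cref{lem:residualid}.

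\textbf{Step 1: the criterion is computable from the law.} Everything used is a functional of the joint distribution of \(X\). This covers the regression coefficients \(\E[X_iX_j]/\E[X_j^2]\), the residuals \(R_{i\mid j}\), and whether \(\E[R_{i\mid j}\mid X_j]=0\) almost surely. Hence the set
\[
\cS(X)=\{j:\ R_{i\mid j}\text{ is mean independent of }X_j\text{ for all } i\neq j\}
\]
is determined by the distribution. We need \(\E[X_j^2]>0\) so that the residuals are defined. This holds because \(X=A\varepsilon\) with \(A\) invertible and nondegenerate disturbances; the positivity is implicit in \Cref{ass:limiam} in the same way as in \Cref{ass:weak-limiam}. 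By \Cref{lem:residualid} every source belongs to \(\cS(X)\). By \Cref{ass:generic-limiam}, applied to the unreduced model, \(\cS(X)\) is exactly the set of sources of \(\cG\). So we can pick a source \(j\in\cS(X)\) from the distribution alone.

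\textbf{Step 2: residualization preserves the model.} Fix such a source \(j\). As in the proof of \Cref{lem:residualid}, \(R_{i\mid j}=X_i-A_{ij}X_j=\sum_{k\neq j}A_{ik}\varepsilon_k\), and \(A_{ij}=\E[X_iX_j]/\E[X_j^2]\) is identified. The reduced vector \(X'=(R_{i\mid j})_{i\neq j}\) therefore satisfies \(X'=A'\varepsilon'\). Here \(A'\) is \(A\) with row and column \(j\) deleted, \(\varepsilon'=(\varepsilon_k)_{k\neq j}\), and \(A'=(I-B')^{-1}\) with \(B'\) the corresponding submatrix of \(B\). Its graph is \(\cG\) with \(j\) deleted.

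We then check that \Cref{ass:limiam} is inherited. Ancestry in the reduced DAG is the restriction of ancestry in \(\cG\): a directed path between two nodes other than \(j\) cannot pass through \(j\), since \(j\) has no parents. Consequently, if \(k\not\preceq' l\) in the reduced graph, then \(k\not\preceq l\) in \(\cG\). The conditions \(\E[\varepsilon_k\mid\varepsilon_l]=0\) involve only pairs of remaining disturbances, so they carry over unchanged. The joint law of \(X'\) is a measurable image of the law of \(X\), so it is identified.

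\textbf{Step 3: recursion and recovery of \(B\).} Because \Cref{ass:generic-limiam} is imposed on every recursively reduced model, Step 1 applies again to \(X'\). Iterating \(p\) times identifies a sequence \(j_1,j_2,\dots,j_p\) in which each node is a source of the graph remaining after removing its predecessors. That is exactly a compatible topological order of \(\cG\). Given this order, \(B\) is recovered from \(\Cov(X)\) by regressing each variable on its predecessors in the order, as in \Cref{thm:moment-identifiability}. This works because \(\Cov(\varepsilon)\) is diagonal: mean independence \(\E[\varepsilon_k\mid\varepsilon_l]=0\) gives \(\E[\varepsilon_k\varepsilon_l]=0\) for every pair with \(k\not\preceq l\) or \(l\not\preceq k\), i.e. for all \(k\neq l\). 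The regression coefficients then equal \(\beta_{ij}\), with zeros exactly off the parent sets.

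\textbf{Expected obstacle.} The delicate point is the precise scope of \Cref{ass:generic-limiam}. If a non-source node passes the test at some stage, the recursion could in principle pick it and leave the model class. The assumption is phrased along "successively removing source nodes", so I would check that it covers every order in which sources may be removed, not just one. I would also confirm that the choice among several detected sources does not matter: any choice yields a valid compatible order, and the final \(B\) is order-independent because the regression on predecessors gives the same structural coefficients for any compatible order.
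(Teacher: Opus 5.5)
Your proposal is correct and takes the same route as the paper: detect sources with the residual mean-independence criterion (\Cref{lem:residualid} together with \Cref{ass:generic-limiam}), residualize one source out, recurse to obtain a compatible order, and recover $B$ by regressing each variable on its predecessors. Your extra checks fill in details the paper leaves implicit: that \Cref{ass:limiam} is inherited by the reduced model, that $\Cov(\varepsilon)$ is diagonal by the one-sided mean-independence conditions, and that the result does not depend on which detected source is chosen; your concern about scope is already covered because the assumption is imposed on \emph{every} recursively reduced model.
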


\begin{proof}
The source nodes satisfy the residual mean independence criterion given in
\Cref{lem:residualid}. The converse holds in every recursively reduced
model by \Cref{ass:generic-limiam}. Hence the source nodes are
identifiable from the joint distribution. After choosing one source, we
residualize the remaining variables and apply the same argument
inductively. This recovers a compatible causal order. Once the order is
known, \(B\) is recovered by linear regression of each variable on its
predecessors.
\end{proof}

\begin{remark}
\label{rem:elliptical}
Gaussian and, more generally, elliptical distributions satisfy the
linearity condition in \Cref{prop:equivalence-mi-linearity}; see
\citet{kelker1970distribution}. Hence the direct residual criterion may
fail there even for non-exogenous variables. Non-Gaussian elliptical
disturbances are incompatible with LiNGAM for a different reason: in the
elliptical family, mutual independence occurs only in the Gaussian case.
\end{remark}

\subsection{Non-identifiability under weaker conditions}
\label{sec:non-identifiability}

The previous identification results use one-sided source-detection
conditions. Once a compatible source-removal order is fixed, these
conditions imply the triangular just-off-diagonal zero pattern in the
disturbance moments. We now explain why this triangular pattern cannot be
weakened arbitrarily. A natural question is whether some of these
one-sided zeros can be removed while retaining generic identifiability.
The next proposition shows that this is impossible in the worst case: if
one one-sided zero relation is dropped, then the same observed moments may
arise from two causal orders.

To formulate this, let \(\Sym^d(\R^p)\) denote the space of
\(p\times\cdots\times p\) symmetric tensors of order \(d\). Define
\[
\cV_d
=
\left\{
\cS\in \Sym^d(\R^p):
\cS_{i,\dots,i,j}=0 \text{ for all } i<j
\right\}.
\]
Let \(\mathrm{LU}(p)\) be the group of \(p\times p\) lower triangular
matrices with ones on the diagonal, acting on \(\Sym^d(\R^p)\) by
\[
[A \bullet \cS]_{i_1,\dots,i_d}
\coloneqq
\sum_{j_1,\dots,j_d=1}^p
A_{i_1,j_1}\cdots A_{i_d,j_d}\cS_{j_1,\dots,j_d}.
\]
For a fixed topological order, \Cref{ass:weak-limiam} implies
\(\mu_d(\varepsilon)\in\cV_d\), and the observed moment tensor is
\(\mu_d(X)=A\bullet\mu_d(\varepsilon)\).

\begin{proposition}[Worst-case necessity of the one-sided zero relations]
\label{prop:worst-case-necessity}
Fix \(d\ge 3\) and a total order \(1<\cdots<p\). For a set
\(W \subsetneq \{(i,j)\in [p]^2 : i<j\}\), let
\[
\cV_d(W)
=
\{\cS\in \Sym^d(\mathbb R^p) : \cS_{i,\dots,i,j}=0
\text{ for all } (i,j)\in W\}.
\]
Consider the coupled moment model
\[
(M,\T)=(A\bullet D,\;A\bullet \cS),
\qquad
A\in \mathrm{LU}(p),\quad D\in \cV_2,\quad \cS\in \cV_d(W).
\]
Then generic identifiability fails.
\end{proposition}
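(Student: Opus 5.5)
The plan is to exhibit two different causal orders that produce the same observed pair $(M,\T)$ on a set of positive measure. First we record why the order is the only thing that can fail. Within the fixed order, $M=A\bullet D=ADA^\top$ with $A\in\mathrm{LU}(p)$ and $D$ diagonal is the unique LDL decomposition of $M$. Hence $A$, $D$ and then $\cS=A^{-1}\bullet\T$ are determined by $(M,\T)$ once the order is fixed. So failure of generic identifiability must mean the following: for parameters in a nonempty open subset of $\mathrm{LU}(p)\times\cV_2\times\cV_d(W)$, the same $(M,\T)$ also lies in the model written for a different total order $\sigma$. Concretely, this requires some permutation $P\neq I$ and $(A',D',\cS')$ of the same type with $P^\top MP=A'\bullet D'$ and $P^\top\bullet\T=A'\bullet\cS'$.

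Fix a pair $(a,b)\notin W$ with $a<b$. I will first treat the case $b=a+1$ and take $\sigma$ to be the transposition of $a$ and $b$. Given $(A,D,\cS)$, let $A'$ and $D'$ come from the LDL decomposition of $P^\top MP$; this is possible on the open set where the relevant leading minors are nonzero. Then set $\cS'=A'^{-1}\bullet(P^\top\bullet\T)$. The two factorizations differ by the matrix $G=A'^{-1}P^\top A$, which maps $D$ to the diagonal $D'$ and $\cS$ to $\cS'$. Because the orders differ only in adjacent positions $a,a+1$, we can write $G=G_0\oplus I$ in the $\{a,a+1\}$ block, up to a lower-unitriangular correction on the remaining coordinates. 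Here $G_0$ is the $2\times2$ matrix that re-triangularizes the block in the opposite order: it is a rescaled rotation taking $\mathrm{diag}(D_{aa},D_{bb})$ to a diagonal matrix. We then check which entries $\cS'_{i,\dots,i,j}$ with $(i,j)\in\sigma(W)$ can become nonzero. Every index pair not involving both $a$ and $b$ is either untouched or transformed within the triangular structure, so its zero is preserved. The only constraint that can be created or destroyed is the one relating $a$ and $b$. That constraint is exactly the one absent from $W$, so $\cS'\in\cV_d(\sigma(W))$ automatically, with $\sigma(W)$ read in the new order. This gives an open set of parameters, namely those where the LDL step is defined, on which $(M,\T)$ admits a second representation with a reversed $a$--$b$ order. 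Generically the two DAGs differ, since $A_{ba}\neq0$ on a dense open set, so the order is not identified.

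For non-adjacent $a<b$, I would reduce to the adjacent case by an order-preserving device. Restrict to the nonempty open subset of parameters for which each intermediate index $a<k<b$ can be moved past $a$ or past $b$ without changing the zero pattern used. The simplest route is to choose $A$, within an open set, as a perturbation of a block-structured matrix, and then transport the adjacent-swap construction. Alternatively, I would carry out the same computation directly with $G_0$ acting on the $\{a,b\}$-plane and the intermediate coordinates absorbed into the lower-unitriangular factor.

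The main obstacle is the verification in the middle step. We must show that the transformed tensor $\cS'$ satisfies all the required zeros $\cS'_{i,\dots,i,j}=0$, $(i,j)\in\sigma(W)$, exactly and not just to first order. In particular, the mixed entries created by the rotation in the $(a,b)$-plane must contaminate no one-sided zero other than the dropped one. I would first do this computation for $p=2$, where it is transparent: there $W=\emptyset$, so $\cV_d(W)=\Sym^d(\R^2)$ and both orders are trivially compatible. I would then check that the same bookkeeping of indices extends to general $p$ in the adjacent case.
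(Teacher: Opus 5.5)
\textbf{The gap: the adjacent swap does not preserve the other zeros.} The step that fails is your claim that, after transposing $a$ and $b=a+1$, every one-sided zero other than the $(a,b)$ one is automatically preserved.

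You are right that the change of disturbance coordinates is block diagonal, and in fact exactly so, with no triangular correction. In the original labels $\cS'=H\bullet\cS$, where $H=I_{a-1}\oplus H_0\oplus I_{p-b}$ and $H_0=\begin{pmatrix} D_{bb}/\delta & -\alpha D_{aa}/\delta\\ \alpha & 1\end{pmatrix}$, with $\alpha=A_{ba}$ and $\delta=\alpha^2D_{aa}+D_{bb}$.

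The trouble is the third indices. For $l>b$ you get $\cS'_{a,\dots,a,l}=\sum_{r=0}^{d-1}\binom{d-1}{r}H_{aa}^{r}H_{ab}^{d-1-r}\,\cS_{a^r b^{d-1-r} l}$. Here $\cS_{a^r b^{d-1-r} l}$ is the entry with $r$ indices equal to $a$, $d-1-r$ equal to $b$, and one equal to $l$. Terms with $1\le r\le d-2$ exist because $d\ge 3$. Their entries have three distinct indices, so no $W$ ever constrains them. The same happens for $\cS'_{b,\dots,b,l}$.

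For $k<a$, the new entries $\cS'_{k,\dots,k,a}$ and $\cS'_{k,\dots,k,b}$ mix $\cS_{k,\dots,k,a}$ and $\cS_{k,\dots,k,b}$. They vanish only if $W$ contains both or neither of $(k,a)$ and $(k,b)$. Concretely, take $p=d=3$ and $W=\{(1,3),(2,3)\}$. The order $(2,1,3)$ requires $\cS'_{113}=0$, but $\cS'_{113}=2H_{11}H_{12}\cS_{123}$, and $\cS_{123}$ is unconstrained.

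\textbf{Why this cannot be patched, and what the paper does.} The non-adjacent reduction fails for the same reason. At $A=I$ we have $M=D$, so the factorization in any order $\sigma$ is trivial. A second representation then exists only if $\cS$ vanishes on $\{(\sigma(i),\sigma(j)):(i,j)\in W\}$, where $\sigma(i)$ is the variable in position $i$. Existence of a $\sigma$-representation is a rational identity in $(A,D,\cS)$. So an order that works on a set of positive measure also works at $A=I$, and hence must map $W$ onto itself.
\begin{itemize}
\item If $W$ is all pairs except a non-adjacent $(a,b)$, this forces $\sigma=\mathrm{id}$: an inversion at $(a,b)$ with $a<c<b$ creates a second inversion.
\item If $W$ is all pairs except $(1,2)$ and $p\ge 3$, it forces the swap of $1$ and $2$, which fails by the computation above.
\end{itemize}
For such $W$ the order is generically identifiable, so the proposition can only hold in its worst-case, existential sense.

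That is how the paper proves it. It reduces to two nodes, where $W=\varnothing$ and $\cV_d(W)=\Sym^d(\R^2)$. Any $(M,\T)$ with $m_{11},m_{22}\neq 0$ then has an LDL factorization of $M$ in either order, and $\cS=A^{-1}\bullet\T$ is automatically admissible. This is your closing $p=2$ remark, which is by itself the whole proof.

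If you want a genuinely $p$-variable instance, drop $(p-1,p)$ and keep every other pair. Then $H=I_{p-2}\oplus H_0$, there is no $l>p$, and every $k<p-1$ has both $(k,p-1)$ and $(k,p)$ in $W$. In that case your swap argument does go through.
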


\begin{proof}
By relabeling, we may assume that the omitted relation is at the pair
\((1,2)\). It therefore suffices to consider the two-node case. Let
\[
M=
\begin{pmatrix}
m_{11} & m_{12}\\
m_{12} & m_{22}
\end{pmatrix}
\in \Sym^2(\mathbb R^2)
\]
with \(m_{11},m_{22}\neq 0\), and let \(\T\in \Sym^d(\mathbb R^2)\) be
arbitrary. For the orientation \(1\to 2\), set
\[
A=
\begin{pmatrix}
1&0\\
a&1
\end{pmatrix},
\qquad
a=\frac{m_{12}}{m_{11}},
\qquad
D=A^{-1}M(A^{-1})^\top.
\]
Then \(D\) is diagonal. The unique constraint
\(\cS_{1,\dots,1,2}=0\) has been removed, so every tensor
\(\cS=A^{-1}\bullet \T\) is admissible. Thus \((M,\T)\) belongs to the
model with order \(1<2\).

Now reverse the order. Let
\[
P=
\begin{pmatrix}
0&1\\
1&0
\end{pmatrix},
\qquad
\widetilde M=PMP^\top,
\qquad
\widetilde \T=P\bullet \T.
\]
Define
\[
\widetilde A=
\begin{pmatrix}
1&0\\
\widetilde a&1
\end{pmatrix},
\qquad
\widetilde a=\frac{\widetilde M_{12}}{\widetilde M_{11}}
=\frac{m_{12}}{m_{22}},
\qquad
\widetilde D=\widetilde A^{-1}\widetilde M(\widetilde A^{-1})^\top.
\]
Again \(\widetilde D\) is diagonal, and every tensor
\(\widetilde \cS=\widetilde A^{-1}\bullet \widetilde \T\) is admissible.
Undoing the permutation writes \((M,\T)\) via the reverse order \(2<1\).
\end{proof}

The next result shows that a single higher-order moment tensor is not
enough to identify the causal order, because the following decomposition
also exists for any permuted order.

\begin{theorem}[Higher-order LDL]
\label{thm:higher-order-LDL}
A generic \(\cT\in \Sym^d(\R^p)\) has a unique decomposition
\[
\cT=L\bullet \cD,
\]
where \(L\in \mathrm{LU}(p)\) and \(\cD\in \Sym^d(\R^p)\) has
\(\cD_{i,\dots,i,j}=0\) for all \(i<j\).
\end{theorem}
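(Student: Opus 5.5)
The plan is to pass to the inverse matrix $M:=L^{-1}\in\mathrm{LU}(p)$, which is again lower unitriangular. The decomposition $\cT=L\bullet\cD$ exists and is unique if and only if there is a unique $M\in\mathrm{LU}(p)$ with $M\bullet\cT\in\cV_d$, and in that case $\cD=M\bullet\cT$. Write $m_i=e_i+\sum_{k<i}M_{ik}e_k$ for the $i$-th row of $M$. By multilinearity, the constraints read
\[
(M\bullet\cT)_{i,\dots,i,j}=\cT(m_i,\dots,m_i,m_j)=0,\qquad i<j .
\]

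The key observation is that these equations are triangular in the rows of $M$. For a fixed $j$, they involve $m_j$ only linearly, and otherwise only the earlier rows $m_1,\dots,m_{j-1}$. I will therefore determine the rows recursively for $j=1,2,\dots,p$.

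Row $m_1=e_1$ is forced. Suppose $m_1,\dots,m_{j-1}$ are already determined, and write $m_j=e_j+\sum_{k<j}x_ke_k$. The $j-1$ equations for $i<j$ become the square linear system
\[
\sum_{k<j}G^{(j)}_{ik}\,x_k=-\cT(m_i,\dots,m_i,e_j),\qquad G^{(j)}_{ik}:=\cT(m_i,\dots,m_i,e_k),\quad i,k<j .
\]
If $\det G^{(j)}\neq0$, this has exactly one solution $x$, which fixes $m_j$ uniquely. Proceeding up to $j=p$ produces a matrix $M$ with $M\bullet\cT\in\cV_d$, giving existence. Uniqueness follows from the same recursion: any valid $M$ must satisfy every one of these systems, so inductively it agrees with the constructed one row by row.

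The main obstacle is showing that the nondegeneracy conditions $\det G^{(j)}\neq0$, for $j=2,\dots,p$, hold generically. Since the entries of $m_i$ come from Cramer's rule, they are rational functions of the entries of $\cT$. I will argue by induction on $j$ as follows.
\begin{itemize}
\item Let $U_{j-1}$ be the Zariski open set on which $\det G^{(2)},\dots,\det G^{(j-1)}$ are all nonzero.
\item On $U_{j-1}$, $\det G^{(j)}$ is a rational function whose denominator is a product of powers of the earlier determinants.
\item Clearing that denominator gives a polynomial $q_j$ with $U_j=U_{j-1}\cap\{q_j\neq0\}$.
\end{itemize}
It then suffices to exhibit one point of $U_{j-1}$ where $\det G^{(j)}\neq0$, because this shows $q_j$ is not the zero polynomial. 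I will use a diagonal tensor $\cT$ with all $\cT_{i,\dots,i}\neq0$. There the recursion gives $m_i=e_i$ for all $i$, and each $G^{(j)}=\mathrm{diag}(\cT_{1,\dots,1},\dots,\cT_{j-1,\dots,j-1})$ is invertible.

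Hence $U_p$ is a nonempty Zariski open subset of $\Sym^d(\R^p)$, and so its complement is a proper algebraic subset, which has measure zero. On $U_p$ the decomposition exists and is unique.

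Finally, I expect to record a remark. Applying the same argument to $P\bullet\cT$ for any permutation $P$ gives such a decomposition for every order. This is why, as \Cref{rem:single-moment-not-enough} states, a single higher-order moment tensor cannot single out the causal order.
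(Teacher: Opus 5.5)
Your proof is correct, but it organizes the elimination differently from the paper.

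The paper inducts on $p$. It factors $M=M^{(2)}M^{(1)}$, where $M^{(1)}$ is a single Gauss transform that clears the entries $\cT_{1,\dots,1,j}$ using the scalar pivot $\cT_{1,\dots,1}$, namely $M^{(1)}_{j1}=-\cT_{1,\dots,1,j}/\cT_{1,\dots,1}$. The second factor $M^{(2)}=\mathrm{diag}(1,\widetilde M)$ comes from the induction hypothesis applied to the subtensor of $M^{(1)}\bullet\cT$ indexed by $\{2,\dots,p\}$. You instead determine $M$ one row at a time, solving at step $j$ a $(j-1)\times(j-1)$ linear system for all of $m_j$. This is a bordering scheme rather than the paper's outer-product elimination.

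Each route has its advantages. The paper's yields explicit formulas with only scalar divisions, and its nondegeneracy conditions are transparent: the successive pivots must be nonzero. However, its genericity step (``a generic subtensor remains generic'') is only asserted. Its uniqueness claim also relies, without saying so, on every element of $\mathrm{LU}(p)$ factoring as $M^{(2)}M^{(1)}$ (which is true). Your route makes both uniqueness and genericity fully explicit, through the nested Zariski-open sets and the diagonal witness. The price is the less transparent conditions $\det G^{(j)}\neq 0$.

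The two sets of conditions in fact coincide. Let $H_{il}=\cT(m_i,\dots,m_i,m_l)$ for $i,l<j$, and let $N$ be the leading unitriangular $(j-1)\times(j-1)$ block of $M$. Then $H=G^{(j)}N^{\top}$, and $H$ is lower triangular by the constraints already imposed. Hence $\det G^{(j)}=\prod_{i<j}\cT(m_i,\dots,m_i)=\prod_{i<j}\cD_{i,\dots,i}$, and these are exactly the paper's pivots.
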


The proof is given in Appendix~\ref{sec:proofs-appendix}.
\Cref{thm:higher-order-LDL} explains why one cannot identify the causal
order from \(\mu_d(X)\) alone. The coupling with \(\mu_2(X)\) is what
breaks the symmetry, because \(\mu_2(X)\) and \(\mu_d(X)\) share the same
lower-triangular mixing matrix \(A\).
\section{Causal discovery algorithms under mean independence}
\label{sec:algorithm}

The identification results of~\Cref{sec:ident} suggest causal discovery algorithms based on measuring departure from mean independence. \Cref{thm:mmi_identification} ensures that under LiMIAM the source nodes can be identified recursively, and \Cref{lem:residualid} gives a constructive criterion.
The resulting algorithm mimics DirectLiNGAM \citep{Shimizu2011DirectLiNGAM}. At each iteration, one identifies a variable that appears most nearly
exogenous in the current reduced DAG by checking whether the residuals of
the other variables are mean independent of it, then regresses it out
and continues on the reduced system. The output is a compatible causal order.

\begin{algorithm}[t]
\caption{DirectLiMIAM}
\label{alg:DirectLiMIAM}
\begin{algorithmic}[1]
\Require i.i.d.\ samples $X^{(1)},\dots,X^{(N)} \in \R^p$, 
\Ensure compatible causal order $\pi_1 < \cdots < \pi_p$

\State $\cI \gets \{1,\dots,p\}$
\State $\pi \gets ()$ \Comment{start with an empty tuple}

\While{$\cI \neq \varnothing$}
\State $X_i \gets \operatorname{standardize}(X_i) \coloneqq \frac{X_i - \widehat{ \operatorname{Mean}} (X_i)}{\sqrt{\widehat \Var (X_i)}} \qquad \forall i\in\cI$
\State
\(
j_\star
\gets
\argmin_{j\in\cI}
\sum_{i\in\cI\setminus\{j\}}
\operatorname{MeanInd}(\widehat R_{i \mid j}, X_j), \qquad \text{with }\widehat R_{i \mid j} = X_i-\frac{\widehat{\Cov}(X_i,X_j)}{\widehat{\Var}(X_j)}X_j,
\)
\State
\(
X_i
\gets
R_{i \mid j_\star}
\qquad \forall i\in\cI\setminus\{j_\star\}
\)
\State $\pi \gets (\pi, j_\star)$
\Comment{append $j_\star$ to the end of $\pi$}
\State $\cI \gets \cI\setminus\{j_\star\}$
\EndWhile
\State \Return $\pi$
\end{algorithmic}
\end{algorithm}

Once we have a compatible causal order via \Cref{alg:DirectLiMIAM}, the adjacency matrix, and hence the DAG, can be estimated by conventional regression methods, such as ordinary least squares (OLS) or least absolute shrinkage and selection operator (LASSO), restricted to predecessors in the order.
We use OLS in Sections \ref{sec:synthetic} and \ref{sec:benchmarks}; our code available on GitHub also contains a LASSO implementation.
Determining which variable is next in the order requires a statistic that measures departure from mean independence, that is, from \(\E(R_{i\mid j}\mid X_j)=0\), which we denote by \(\operatorname{MeanInd}(R_{i\mid j},X_j)\). Several alternatives can be used.

\medskip
\noindent\textit{Kernel and nonparametric regression.}
One estimates the conditional mean nonparametrically using kernel or local polynomial regression and measures the magnitude of the fitted
conditional mean. For example, if \(\hat m(x)\approx \E[R_{i\mid j}\mid X_j=x]\), one may use
\begin{equation}\label{eq:kernellimiam} 
\operatorname{MeanInd}(\widehat R_{i\mid j},X_j)
=
\frac1N\sum_{\nu=1}^N (\hat m(X_j^{(\nu)}) - \bar R_{i\mid j} )^2
\end{equation}
as a dependence score; see for example \citet{Fan1996}. Here $\bar R_{i\mid j}$ denotes the average residual, i.e. the constant estimate.  

\medskip
\noindent\textit{Series methods.}
One approximates the conditional mean by a finite-dimensional basis
expansion,
\[
\widehat R_{i\mid j}
=
a_0+\sum_{k=1}^K a_k\phi_k(X_j)+u_i,
\]
and uses \eqref{eq:kernellimiam} to assess mean independence. Alternatively, one can measure an auxiliary \(R^2\) or the sum of squared fitted values as a
score. These are simple to implement and directly target conditional mean predictability; see \citet{Chen2007}.

\medskip
\noindent\textit{Moment-based methods.}
Mean independence implies
\[
\E\!\left[R_{i\mid j}g_k(X_j)\right]=0,
\qquad k=1,\dots,K,
\]
for any family of test functions \(g_k\). One may therefore form a
quadratic-form statistic based on these empirical moments, in the spirit
of the generalized method of moments.

All of these procedures test the weaker condition of mean independence rather than full independence. In contrast to the original DirectLiNGAM dependence measure, which detects arbitrary nonlinear dependence, the statistics above focus on whether the conditional mean of the residual varies with the candidate regressor.

\begin{remark}
The moment-based measures connect directly to our identification arguments. Using the definition of \(R_{i\mid j}\), the moment condition \(\E[R_{i\mid j}g(X_j)]=0\) becomes
\[
\frac{\E[X_ig(X_j)]}{\E[X_jg(X_j)]}
=
\frac{\Cov(X_i,X_j)}{\Var(X_j)}.
\]
For monomials \(g(x)=x^{d-1}\), this is the moment-ratio of~\Cref{sec:ident}.
\end{remark}

\section{Numerical experiments} 
\label{sec:synthetic}
 
We compare the finite-sample performance of DirectLiNGAM \cite{Shimizu2011DirectLiNGAM} and various versions of our DirectLiMIAM method: Kernel, Sieves and Moment based. We assess the absolute and relative performance of these methods under both independent and dependent errors.

\subsection{Simulation design}

For each of the 100 Monte Carlo replications, we generate a DAG with
\(p \in \{2,3,4,5,6\}\) nodes and sample size \(T=500\). The true
causal order is \(1,\dots,p\). Observations are simulated from the
structural model \eqref{eq:sem}, where \(B\) is strictly lower
triangular and its nonzero coefficients are drawn independently from a
uniform distribution on \([0.3,0.8]\).

The main ingredient of the simulation study is how the disturbances \(\varepsilon_i\) are generated. We first sample independent mean-zero auxiliary variables \(u_1,\dots,u_p\), and then transform them according to one of
several dependence designs. In the benchmark design, the disturbances
are simply \(\varepsilon_j=u_j\), so they are mutually independent. In
the four nonlinear designs below, the disturbances satisfy the
order-dependent mean-independence property
\[
\mathbb E\!\left[\varepsilon_j \mid \varepsilon_1,\dots,\varepsilon_{j-1}\right] = 0,
\]
while generally failing to be mutually independent.

We consider four distributions for an auxiliary variable
\(u\):
\begin{enumerate}
    \item uniform: \(u \sim U[-1,1]\);
    \item U-shaped Beta: \(u = 2V - 1\), where \(V \sim \mathrm{Beta}(0.5,0.5)\);
    \item concentrated Beta: \(u = 2V - 1\), where \(V \sim \mathrm{Beta}(2,2)\);
    \item Bimodal: $ u \sim
\frac{1}{2}\,  U[-1,-0.3]
+
\frac{1}{2}\,  U[0.3,1]$ 
\end{enumerate}
All four are symmetric around zero, but differ in shape. Similarly as in the ICA literature, stronger deviations from spherical distributions generally imply that it is easier to detect the correct causal order. We combine them with four dependence designs for the disturbances.

\medskip
\noindent\textbf{Independent design.}
This is the benchmark setting. We take
\[
\varepsilon_j = u_j,
\qquad j=1,\dots,p,
\]
so the disturbances are mutually independent.

\medskip
\noindent\textbf{Lagged heteroskedasticity design.}
The scale depends on a signed exponentially weighted history of the
past. We set $\varepsilon_1 = u_1$ and for \(j\ge 2\), define
\[
S_{j-1}=\sum_{\ell=1}^{j-1}\rho^{\,j-\ell-1}\varepsilon_\ell,
\qquad \rho\in(0,1),
\]
so that more recent disturbances receive larger weight. To make the
strength comparable across dimensions and across
auxiliary-variable distributions, we standardize this history: $\widetilde S_{j-1}
=
S_{j-1}/\operatorname{sd}(S_{j-1})$. We then set $\varepsilon_j = \sigma_j u_j$, 
$\sigma_j = \exp\!\left(\frac{\gamma}{2}\widetilde S_{j-1}\right)$,
where \(\gamma>0\) controls the strength of the heteroskedasticity.
The scale of the disturbance depends on the signed magnitude of all previous disturbances.

\medskip
\noindent\textbf{Threshold design.}
The scale changes discontinuously depending on the sign of the
average past disturbance. We set $\varepsilon_1 = u_1$, and for \(j\ge 2\),
\[
\varepsilon_j =
\begin{cases}
u_j, & \bar\varepsilon_{j-1}\le 0,\\[1ex]
2u_j, & \bar\varepsilon_{j-1}>0,
\end{cases}
\qquad
\bar\varepsilon_{j-1}
=
\frac{1}{j-1}\sum_{\ell=1}^{j-1}\varepsilon_\ell.
\]
The conditional variance is larger when the average of the
previous disturbances is positive.

\medskip
\noindent\textbf{Conditional mixture design.}
The distribution of the current disturbance depends on the past
through a two-component mixture. Let
\[
\bar\varepsilon_{j-1}
=
\frac{1}{j-1}\sum_{\ell=1}^{j-1}\varepsilon_\ell,
\qquad
p_{j-1}
=
\frac{1}{1+\exp(-2\bar\varepsilon_{j-1})}.
\]
For \(j\ge 2\), define
\[
\varepsilon_j =
\begin{cases}
u_j^{(L)}, & \text{with probability } p_{j-1},\\[1ex]
2.5\,u_j^{(H)}, & \text{with probability } 1-p_{j-1},
\end{cases}
\]
where \(u_j^{(L)}\) and \(u_j^{(H)}\) are independent draws from the
chosen auxiliary-variable distribution. Since both mixture components
are centered at zero, the conditional mean remains zero, but the
conditional distribution and tail behavior vary with the disturbance
history.

Combining the four dependence designs with the four auxiliary-variable
distributions yields \(16\) simulation environments.

\subsection{Implementation details} 

For DirectLiNGAM \cite{Shimizu2011DirectLiNGAM} we use the pairwise likelihood score of \cite{Hyvarinen2013} to measure departures from independence. For the mean independent kernel implementation we rely on a local linear regression to compute the regression function where the bandwidth is chosen using $K$-fold cross-validation with $K=5$. Using the kernel estimate we compute \eqref{eq:kernellimiam} as our measure for mean independence. For the sieves implementation we use a cubic B-spline basis with equi-spaced knots and a number of splines that is selected using $5$-fold cross-validation. Finally, for the moment-based implementation we use $g(x)=(x^2, x^3)'$ and measure deviations from mean independence from the inner-product of the sample moments. 

We report exact DAG recovery and the structural Hamming distance (SHD) between the estimated and true DAGs. The SHD between two DAGs \(\cG\) and \(\widetilde{\cG}\) is the number of directed edges present in \(\cG\) but not \(\widetilde{\cG}\), plus the number present in \(\widetilde{\cG}\) but not \(\cG\).

\subsection{Results} 

Figures \ref{fig:synthetic-comparison1}-\ref{fig:synthetic-comparison2} document the main results for the causal orderings. The order success rate and SHD show qualitatively comparable results. First, for the independent design, there are no systematic differences between the independent and mean independent algorithms. This is reassuring for the LiMIAM methods, as there do not appear to be large losses when the true errors are independent. 

In the other three designs, which violate mutual independence, it is clear that the conventional DirectLiNGAM procedure performs worse. Most notably, the probability of getting the ordering correct drops severely for many designs and the SHD increases. In contrast, the DirectLiMIAM algorithm continues to perform adequately. Among the implementations, the kernel implementation appears most reliable. It is never the worst and often the best.

\begin{remark}
    We also compared our methods with ICA-LiNGAM \cite{Shimizu2006LiNGAM} and NOTEARS \cite{zheng2018dags}. Their performance was found to be inferior to that of the DirectLiNGAM method and, therefore, omitted from the discussion.  
\end{remark}

\begin{figure}[t]
    \centering
    \includegraphics[width=1\linewidth]{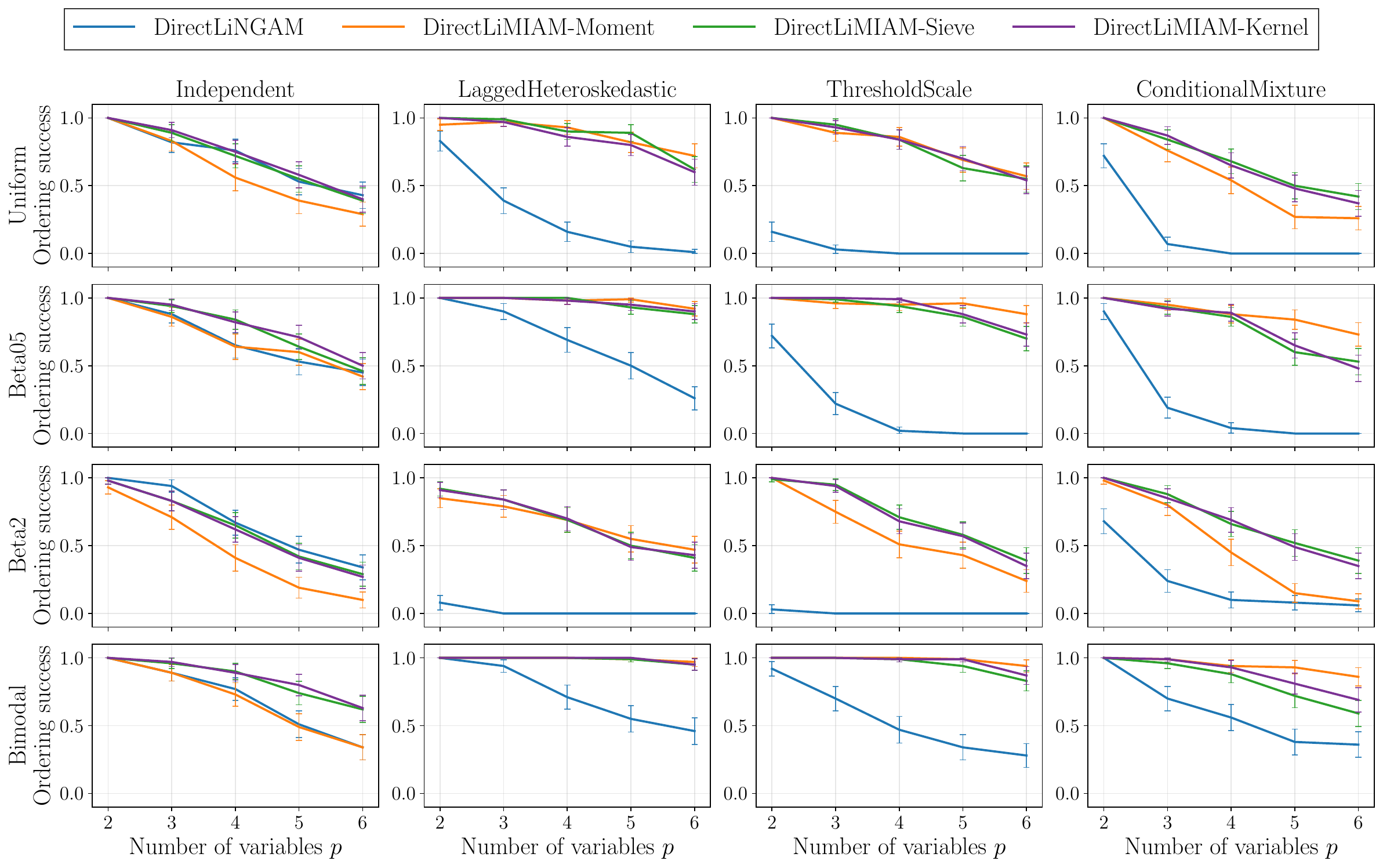}
    \caption{Exact ordering success rate across causal recovery algorithms.}
    \label{fig:synthetic-comparison1}
\end{figure}

\begin{figure}[t]
    \centering
    \includegraphics[width=1\linewidth]{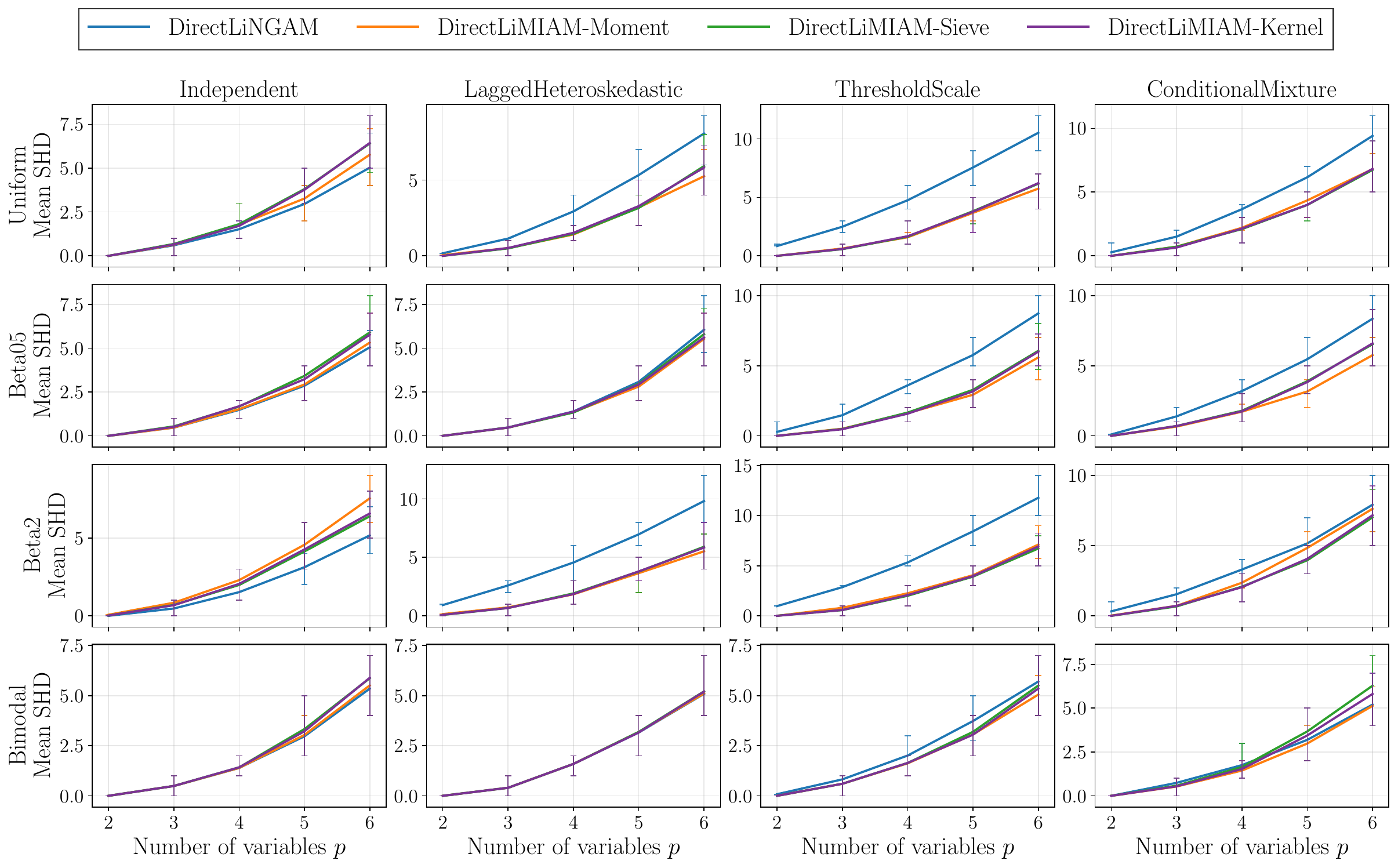}
    \caption{Structural Hamming distance across causal recovery algorithms.  }
    \label{fig:synthetic-comparison2}
\end{figure}

\section{Causal discovery for the oil market} 
\label{sec:benchmarks}

Since the seminal work of \cite{Sims1980}, causal inference in macroeconomics often proceeds in two steps. First, an essentially unrestricted vector autoregressive model is used to describe the dynamics of a set of macroeconomic or financial variables. Second, contemporaneous identification restrictions are imposed to trace out the dynamic causal effects of the exogenous inputs. The SVAR model with $k$ lags is described by 
\begin{equation}
    X_t = c + \Phi_1 X_{t-1} + \ldots + \Phi_k X_{t-k} + A \varepsilon_t ~, \qquad t=1, \ldots, T ~, 
\end{equation}
where the disturbances $\varepsilon_t = (\varepsilon_{1t}, \ldots , \varepsilon_{pt})'$ often have an economic interpretation. This model faces a similar identification problem as the static LSEM \eqref{eq:mix}, which can be resolved by imposing restrictions on $A$ and on the distribution of the disturbances $\varepsilon_t$.  

An early identification approach is based on timing restrictions, which set $A$ to be lower triangular according to an order defined a priori by economic theory or institutional knowledge \citep{Sims1980}. More recently, works such as \citep{LanneLutkepohl2010,Hyvarinen2010,Lanne2017,GourierouxMonfortRenne2017,Hoesch2024} have used the ICA assumption, that the components of the disturbances $\varepsilon_i$ are mutually independent, to recover $A$ up to sign and permutation. Additional identifying restrictions are discussed in the review by \cite{Ramey2016} and in the textbook of \cite{KilianLutkepohl.17}.   

Our framework also imposes that $A$ is lower triangular, after a permutation, but does not require knowledge of the permutation; i.e., of the causal ordering. The only requirement is that the disturbances satisfy the order-dependent mean-independence \Cref{ass:limiam}. Mean independence allows for common volatility changes, which are omnipresent in macroeconomic time series.  

In this empirical study we explore our identification results and the DirectLiMIAM algorithm for a monthly SVAR model of the oil market. Such models have previously been studied in \cite{Kilian2009} and \cite{Kanzig2021}, among many others. We refer to \cite{Kilian2023} for a recent review of oil market modeling.   

\subsection{Data} 

We consider an SVAR model for five variables: WTI oil price (POIL), world oil production (OILPROD), world oil inventories (OILSTOCKS), U.S. industrial production (IP), and the U.S. consumer price index (CPI). The data are monthly from January 1975 until June 2025. We include $k=24$ lags in the SVAR to ensure that all serial correlation is captured and that no serial correlation remains in the errors. 

To this baseline oil SVAR we add a sixth variable that measures surprises in Organization of the Petroleum Exporting Countries (OPEC) announcements from high-frequency changes in oil price futures (SURPRISE). \cite{Kanzig2021} argues that these surprises should come first in the causal ordering of the SVAR disturbances and presents exercises to verify this claim. In our study we use this variable to capture part of the ``ground truth'' of the causal ordering and thus to assess the validity of our methodology.

\subsection{Independence tests} 

We start by testing whether the disturbance independence assumption underlying DirectLiNGAM is reasonable for oil market disturbances. For this, we first estimate the coefficients $c,\Phi_1, \ldots, \Phi_k$ using least squares, compute the residuals of the SVAR, standardize them so that they have unit variance, and apply the DirectLiNGAM algorithm. Using the estimated $A = (I - B)^{-1}$, we can estimate the disturbances:
\[
\hat \varepsilon_t = \hat A^{-1} \hat U_t~, \qquad t=1, \ldots, T
\]
where the residuals are
\begin{equation}\label{eq:varres}
\hat U_t = \operatorname{standardize}\left( X_t - \hat c - \hat \Phi_1 X_{t-1} - \ldots - \hat \Phi_k X_{t-k} \right)~.  
\end{equation}
We apply the mutual independence test of \cite{Pfister2018} to $\{ \hat \varepsilon_t \}$ to assess whether the independence assumption underlying DirectLiNGAM is plausible. 

The $p$-value of the test is 0.024, rejecting the independence assumption. Moreover, the causal ordering produced by the DirectLiNGAM algorithm (left panel of \Cref{fig:oilmarketDAGs}) does not correspond to economic intuition. First, OPEC surprises are not ordered first in the DAG, and second, oil production disturbance is last in the ordering, whereas we would expect production to come earlier, since oil production adjusts with delay to economic developments. Oil production cannot scale up quickly, making an immediate dependence on economic conditions and oil prices implausible \citep{Kilian2009}. In summary, the conventional DirectLiNGAM approach does not give interpretable results when applied to the oil market SVAR.  

\begin{figure}[t]
    \centering
    \includegraphics[width=0.75\linewidth]{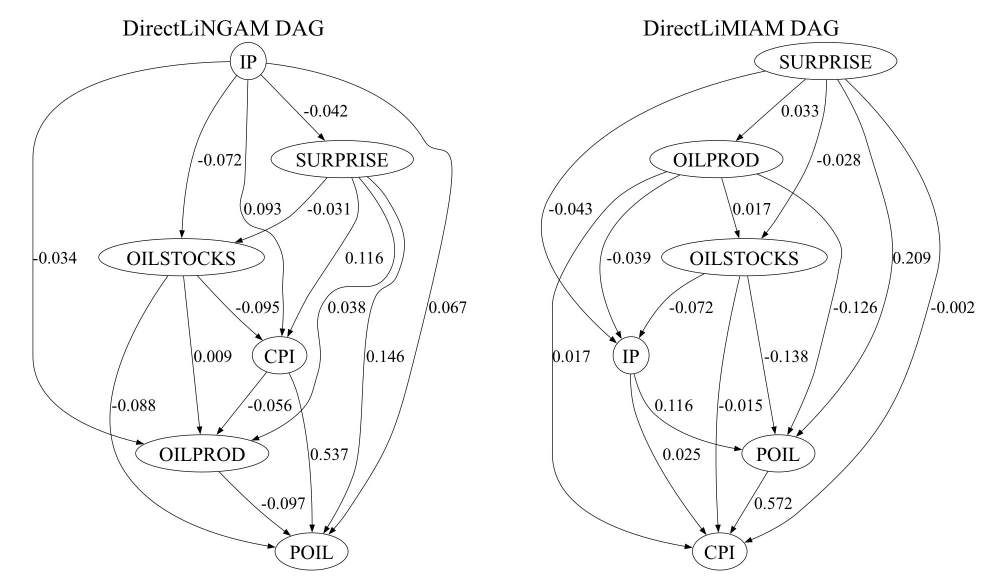} 
    \caption{Estimated oil market DAGs for LiNGAM and LiMIAM.}
    \label{fig:oilmarketDAGs}
\end{figure}

\subsection{DirectLiMIAM results}  

Next, we apply our DirectLiMIAM algorithm, using the local linear kernel implementation, to the SVAR residuals. The DAG is shown in the right panel of \Cref{fig:oilmarketDAGs}. The estimated DAG now orders the surprises in the OPEC announcements first, consistent with \cite{Kanzig2021}, who argues that these announcements provide exogenous changes in oil supply and should not be a function of other contemporaneous inputs. Further corroborating evidence is the positive link from the OPEC surprise to world oil production, which subsequently reduces the price of oil and increases the U.S. general price level (CPI). Also, the surprise increases the price of oil and reduces inventories.  

The estimated $B$ matrix, with bootstrap standard errors in the subscripts, becomes 
\medskip 
\begin{center}
\begin{tabular}{lccccc}
 & SURPRISE & OILPROD & OILSTOCKS & IP & POIL \\
OILPROD & $\phantom{-}0.033_{(0.043)}$ & -- &  &   &      \\
OILSTOCKS & $-0.028_{(0.043)}$ & $\phantom{-}0.017_{(0.042)}$ & -- &  &     \\
IP & $-0.043_{(0.076)}$ & $-0.039_{(0.055)}$ & $-0.072_{(0.062)}$ & -- &     \\
POIL & $\phantom{-}0.209_{(0.046)}$ & $-0.126_{(0.041)}$ & $-0.138_{(0.034)}$ & $\phantom{-}0.116_{(0.055)}$ & --   \\
CPI & $-0.002_{(0.038)}$ & $\phantom{-}0.017_{(0.029)}$ & $-0.015_{(0.032)}$ & $\phantom{-}0.025_{(0.052)}$ & $\phantom{-}0.572_{(0.047)}$ \\
\end{tabular}
\end{center}
\medskip 
There is substantial uncertainty in the estimated coefficients. Nevertheless, the OPEC surprises have a significant effect on oil prices, which in turn significantly affect U.S. inflation. 

\subsection{Mean independence tests}  

Next, we assess the validity of the order-dependent mean independence assumption. Specifically, we conduct a specification test for 
\[
H_{0}:
\mathbb E \left[
\varepsilon_{it}
\mid
 \varepsilon_{1t},\ldots, \varepsilon_{i-1,t}
\right]
=
0 \quad \forall \quad i \geq 2~.
\]
We build our test statistics using a kernel regression approach. That is, we recover the SVAR residuals $\hat U_t$, run the DirectLiMIAM algorithm to get $\hat A = (I - \hat B)^{-1}$ and $\hat \varepsilon_t = \hat A^{-1} \hat U_t$  for $t=1,\ldots,T$. We then reorder the disturbances according to the estimated causal ordering. Let $\tilde{\bm \varepsilon}_{i}$ denote the $T \times 1$ vector of centered realizations of \(\hat\varepsilon_{it}\), and let \(\widetilde{ K}_{i-1}\) denote a centered $T \times T$ Gaussian kernel matrix constructed from \((\hat \varepsilon_{1,t},\ldots,\hat \varepsilon_{i-1,t})\). The test statistic is
\[
\hat T
=
\sum_{i=2}^p \hat T_i, \qquad \text{with} \qquad \hat T_i
=
\tilde{\bm \varepsilon}'_{i} \,
\widetilde{ K}_{i-1}
\, \tilde{\bm \varepsilon}_{i}.
\]
Under \(H_{0}\), the conditional mean of \(\hat\varepsilon_{it}\) is zero, so \(\hat T_i\) should be close to zero. We obtain \(p\)-values by permutation, randomly permuting \(\hat\varepsilon_{it}\) while holding the conditioning variables fixed. 
When applying this test to the oil market residuals, we find a $p$-value of 0.772, and thus do not reject the ordered mean-independence hypothesis.  

\section{Conclusion} 

We have introduced LiMIAM, a model and algorithm for causal discovery under mean independence and linearity. We demonstrated its identifiability and provided a practical algorithm, called DirectLiMIAM, that is evaluated both in theory and in synthetic experiments. We showed that LiMIAM works under any notion of independence at least as strong as our order-dependent mean-independence (or its implication for a fixed $d$-th order moment tensor). If \(\varepsilon_i\) is mean independent of \( \varepsilon_j \) for all $i \neq j$, then the unrestricted mixing matrix in \(X=A\varepsilon\) is generically identifiable up to signed permutation of its columns by \citet{RibotSeigalZwiernik2025BeyondICA}. If \(A=(I-B)^{-1}\), the LiNGAM permutation-and-scaling step identifies \(B\). The present paper goes further by removing the need for full pairwise mean independence, using the ordered structure of the model, and developing a DirectLiNGAM analogue that does not rely on ICA. In our experiments, the pairwise mean independent component analysis algorithm of \citet{RibotSeigalZwiernik2025BeyondICA} was not competitive with DirectLiMIAM, even when full pairwise mean independence held.

We conclude by commenting on a direction for future work regarding the linearity of the causal dependencies.
Given any two random variables \(X\) and \(Y\), we can always write
\[
X = \E[X \mid Y] + \varepsilon
\]
such that \(\varepsilon\) is mean independent of \(Y\). Hence
identifiability cannot hold if one allows arbitrary nonlinear causal
dependencies. The distribution-level route in \Cref{sec:distribution-source-detection}
relies on the fact that, for linear regression, mean independence of the
residual \(R_{Y\mid X}\) is equivalent to linearity of the conditional
mean \(\E[Y\mid X]\); see \Cref{prop:equivalence-mi-linearity}. One could replace the linear family by a richer class
\(\mathcal F\) of functions and define the residual using the best
\(\mathcal F\)-approximation of \(Y\) by a function of \(X\). An analogue of \Cref{prop:equivalence-mi-linearity} would
characterize when the residual is mean independent of \(X\) in terms of
whether \(\E[Y\mid X]\) belongs to \(\mathcal F\).

This suggests that, if one allows nonlinear structural
relations, identifiability can still be pursued relative to
the chosen function class \(\mathcal F\). The corresponding genericity
assumption would exclude observational laws for which a
reverse-direction representation also belongs to \(\mathcal F\). The role here of linearity and
\Cref{ass:generic-limiam} would be replaced by a nonlinear function
class with analogous exogenous-variable detectability
assumption adapted to that class.

\bibliographystyle{plainnat}
\bibliography{bibliography}

\markboth{APPENDIX}{APPENDIX}
\appendix

\section{Omitted proofs} \label{sec:proofs-appendix}

\begin{proof}[Proof of \Cref{thm:lingam_reversal}]
The ICA step in LiNGAM first whitens the data. We introduce whitening
to analyze the behavior of ICA-based LiNGAM contrasts. It is
not a natural step for our proposed method, because whitening replaces
the triangular mixing matrix \(A=(I-B)^{-1}\) by an orthogonal factor
and therefore discards the lower triangular structure induced by the DAG.

Since \(\Cov(X)=AA^\top\), the whitened vector is
\[
Y_0=(AA^\top)^{-1/2}X=U_0\varepsilon,
\qquad
U_0:=(AA^\top)^{-1/2}A,
\]
where \(U_0\) is orthogonal. ICA then applies another orthogonal matrix
\(Q\) to the whitened data, so the recovered components are
\[
\hat\varepsilon=QY_0=QU_0\varepsilon.
\]
Because \(QU_0\) is orthogonal, and every \(2\times 2\) orthogonal
matrix is a rotation or a reflection, while reflections are immaterial
for the present argument, we may write
\[
\hat\varepsilon(\theta)=R(\theta)\varepsilon,
\qquad
R(\theta)=
\begin{pmatrix}
\cos\theta & -\sin\theta\\
\sin\theta & \cos\theta
\end{pmatrix}.
\]
The value \(\theta=0\) corresponds to the true source coordinates, while
\(\theta=\pi/4\) gives
\[
\hat\varepsilon(\pi/4)
=
\frac{1}{\sqrt2}
\begin{pmatrix}
1 & -1\\
1 & 1
\end{pmatrix}
\varepsilon.
\]
In the present two-dimensional setting, the population JADE objective is
\(g(\theta)=\kappa_4(\hat\varepsilon_1(\theta))^2+
\kappa_4(\hat\varepsilon_2(\theta))^2\). By multilinearity of cumulants
and the assumptions
\(\kappa_4(\varepsilon)_{1112}=\kappa_4(\varepsilon)_{1222}=0\),
\[
\kappa_4(\hat\varepsilon_1(\theta))
=
k_1\cos^4\theta+k_2\sin^4\theta+6c\cos^2\theta\sin^2\theta,
\]
\[
\kappa_4(\hat\varepsilon_2(\theta))
=
k_1\sin^4\theta+k_2\cos^4\theta+6c\cos^2\theta\sin^2\theta.
\]
Using
\[
\sin^4\theta+\cos^4\theta=1-2\sin^2\theta\cos^2\theta,
\qquad
\cos^4\theta-\sin^4\theta=\cos(2\theta),
\]
and
\[
\sin^2\theta\cos^2\theta=\frac{\sin^2(2\theta)}{4}
=\frac{1-\cos^2(2\theta)}{4},
\]
we can rewrite these as
\[
\kappa_4(\hat\varepsilon_1(\theta))
=
a+b\cos(2\theta)-q\cos^2(2\theta),
\]
\[
\kappa_4(\hat\varepsilon_2(\theta))
=
a-b\cos(2\theta)-q\cos^2(2\theta),
\]
where
\[
a=\frac{k_1+k_2+6c}{4},
\qquad
b=\frac{k_1-k_2}{2},
\qquad
q=\frac{6c-(k_1+k_2)}{4}.
\]
Setting \(u=\cos^2(2\theta)\in[0,1]\), the linear terms in
\(\cos(2\theta)\) cancel when the two squares are added, and so
\[
g(\theta)=2(a-qu)^2+2b^2u.
\]
Thus \(g\) is a quadratic polynomial in \(u\) with nonnegative leading
coefficient \(2q^2\). Its maximum over \(u\in[0,1]\) is therefore
attained at the boundary points \(u\in\{0,1\}\), corresponding to
\(\theta\in\{0,\pi/4\}\pmod{\pi/2}\).

At \(\theta=0\), we have \(g(0)=k_1^2+k_2^2\). At
\(\theta=\pi/4\), both recovered components have fourth cumulant
\((k_1+k_2+6c)/4\), so
\(g(\pi/4)=(k_1+k_2+6c)^2/8\). Comparing these two values gives the
stated inequality criterion.

It remains to compute the recovered mixing matrix in the case
\(\theta=\pi/4\). Since \(X=A\varepsilon=\hat A\hat\varepsilon(\pi/4)\),
we obtain
\[
\hat A
=
A\frac{1}{\sqrt2}
\begin{pmatrix}
1 & 1\\
-1 & 1
\end{pmatrix}
=
\frac{1}{\sqrt2}
\begin{pmatrix}
1 & 1\\
0 & 2
\end{pmatrix}.
\]
A direct calculation then gives
\[
\hat W=\hat A^{-1}
=
\frac{1}{\sqrt2}
\begin{pmatrix}
2 & -1\\
0 & 1
\end{pmatrix}.
\]
LiNGAM rescales the rows of \(\hat W\) to make the diagonal equal to
one, yielding
\[
\hat W_{\mathrm{scaled}}
=
\begin{pmatrix}
1 & -1/2\\
0 & 1
\end{pmatrix}\quad\mbox{and so }\qquad \hat B=I-\hat W_{\mathrm{scaled}}
=
\begin{pmatrix}
0 & 1/2\\
0 & 0
\end{pmatrix}.
\]
This is exactly the structural matrix associated with the reversed
causal order \(X_2\to X_1\).
\end{proof}

\begin{proof}[Proof of \Cref{thm:higher-order-LDL}]
We argue by induction on \(p\). The case \(p=1\) is trivial.
Assume the statement holds for \(p-1\), and let \(\cT\in \Sym^d(\R^p)\)
be generic. We construct \(M \in \mathrm{LU}(p)\) such that
\[
[M\bullet \cT]_{i,\dots,i,j}=0
\qquad \text{for all } 1\le i<j\le p.
\]
Then \(\cD=M\bullet \cT\) and \(L=M^{-1}\). Write \(M=M^{(2)}M^{(1)}\),
where
\[
M^{(1)}=
\begin{pmatrix}
1 & 0 & 0 & \cdots & 0\\
M^{(1)}_{21} & 1 & 0 & \cdots & 0\\
M^{(1)}_{31} & 0 & 1 & \cdots & 0\\
\vdots & \vdots & \vdots & \ddots & \vdots\\
M^{(1)}_{p1} & 0 & 0 & \cdots & 1
\end{pmatrix}
\quad\mbox{and}\qquad
M^{(2)}=
\begin{pmatrix}
1 & 0 & \cdots & 0\\
0 & & & \\
\vdots & & \widetilde M & \\
0 & & &
\end{pmatrix},
\]
with \(\widetilde M \in LU(p-1)\). For each \(j\ge 2\),
\[
[M^{(1)}\bullet \cT]_{1,\dots,1,j}
=
\sum_{k=1}^j M^{(1)}_{jk}\,\cT_{1,\dots,1,k}
=
M^{(1)}_{j1}\,\cT_{1,\dots,1}+\cT_{1,\dots,1,j}.
\]
Hence imposing
\[
[M^{(1)}\bullet \cT]_{1,\dots,1,j}=0
\qquad \text{for all } j\ge 2
\]
uniquely determines
\[
M^{(1)}_{j1}
=
-\frac{\cT_{1,\dots,1,j}}{\cT_{1,\dots,1}},
\qquad j=2,\dots,p,
\]
provided \(\cT_{1,\dots,1}\neq 0\), which holds generically. Let
\(\widetilde{\cT}=M^{(1)}\bullet \cT\). By construction,
\[
\widetilde{\cT}_{1,\dots,1,j}=0
\qquad \text{for all } j\ge 2.
\]
Now let \(1<i<j\le p\). Since \(M^{(2)}_{k1}=0\) for all \(k>1\), we
obtain
\begin{align*}
[M^{(2)}\bullet \widetilde{\cT}]_{i,\dots,i,j}
&=
\sum_{k_1=1}^i \cdots \sum_{k_{d-1}=1}^i \sum_{k_d=1}^j
M^{(2)}_{ik_1}\cdots M^{(2)}_{ik_{d-1}}M^{(2)}_{jk_d}
\,\widetilde{\cT}_{k_1,\dots,k_d}\\
&=
\sum_{k_1=2}^i \cdots \sum_{k_{d-1}=2}^i \sum_{k_d=2}^j
M^{(2)}_{ik_1}\cdots M^{(2)}_{ik_{d-1}}M^{(2)}_{jk_d}
\,\widetilde{\cT}_{k_1,\dots,k_d}.
\end{align*}
These equations involve only the \((p-1)\)-dimensional subtensor of
\(\widetilde{\cT}\) indexed by \(\{2,\dots,p\}\) and the lower unit
triangular block \(\widetilde M\). Since a generic subtensor remains
generic, the induction hypothesis applies and uniquely determines
\(\widetilde M\) by imposing
\[
[M^{(2)}\bullet \widetilde{\cT}]_{i,\dots,i,j}=0
\qquad \text{for all } 1<i<j\le p.
\]
Moreover,
\[
[M^{(2)}\bullet \widetilde{\cT}]_{1,\dots,1,j}=0
\qquad \text{for all } j\ge 2, 
\]
because \(\widetilde{\cT}_{1,\dots,1,j}=0\) and \(M^{(2)}_{j1}=0\) for
all \(j\ge 2\). This proves existence and uniqueness.
\end{proof}

\section{Genericity conditions in the binary case}\label{sec:genericity-binary}

Consider the finite order LiMIAM setup from \Cref{ass:weak-limiam} for the two-variable DAG $X_1 \to X_2$. That is, we have the LSEM
\[
\begin{cases}
    X_1 = \varepsilon_1 \\
    X_2 = B_{21} X_1 + \varepsilon_2
\end{cases}
\]
such that $B_{21} \neq 0$, $\E[\varepsilon] = 0$, $\E[\varepsilon_1^2], \E[\varepsilon_2^2] > 0$, $\E[\varepsilon_1 \varepsilon_2] =0$, $\E[\|\varepsilon\|^d] < \infty$, and $\E[\varepsilon_1^{d-1} \varepsilon_2] = 0$ for some integer $d \geq 3$. The genericity conditions of \Cref{lem:mmi_nonsource} are given by
\[
S^{(d)}_{12} \coloneqq \E[X_1X_2^{d-1}] \E[X_2^2] - \E[X_1X_2] \E[X_2^d] \neq 0.
\]
Expanding this expression, we obtain
\[
\begin{aligned}
S^{(d)}_{12}
&= \E\!\left[\varepsilon_1(B_{21}\varepsilon_1+\varepsilon_2)^{d-1}\right]\E\!\left[(B_{21}\varepsilon_1+\varepsilon_2)^2\right]
-\E\!\left[\varepsilon_1(B_{21}\varepsilon_1+\varepsilon_2)\right]\E\!\left[(B_{21}\varepsilon_1+\varepsilon_2)^d\right] \\[0.5em]
&=
-B_{21}\E[\varepsilon_1^2]\E[\varepsilon_2^d]
+\sum_{\substack{k=1\\ k\neq d-1}}^{d}
\left[
\binom{d-1}{k-1}B_{21}^{k-1}\E[\varepsilon_2^2]
-\binom{d-1}{k}B_{21}^{k+1}\E[\varepsilon_1^2]
\right]
\E[\varepsilon_1^k\varepsilon_2^{\,d-k}],
\end{aligned}
\]
which is not identically zero. Under pairwise mean independence (i.e., if we also assume $\E[\varepsilon_1 \mid \varepsilon_2] = 0$), the term $k=1$ in the expression above would also vanish, but we would still get a nonzero polynomial. Under independence and using cumulants instead of moments, all off-diagonal entries vanish, so the genericity condition becomes
\[
-B_{21}\kappa_2(\varepsilon)_{11} \kappa_d(\varepsilon)_{2, \dots, 2}
+B_{21}^{d-1} \kappa_2(\varepsilon)_{22} \kappa_d(\varepsilon)_{1, \dots, 1} \neq 0.
\]
This explains why LiMIAM subsumes LiNGAM, cf. \Cref{fig:schematic}.

\end{document}